\newcommand{\IN}{\mathbb{N}}
\newcommand{\IR}{\mathbb{R}}
\newcommand{\IQ}{\mathbb{Q}}
\newcommand{\bR}{\mathbf{R}}
\newcommand{\IZ}{\mathbb{Z}}
\newcommand{\IC}{\mathbb{C}}
\newcommand{\Casm}{\mathsf{Asm}(\mathcal{K}_2)}
\newcommand{\Cset}{\mathsf{Set}}
\newcommand{\Baire}{\IN^\IN}
\newcommand{\cff}{\mathit{ff}}
\newcommand{\ctt}{\mathit{tt}}
\newcommand{\abs}[1]{\left|#1\right|}
\newcommand{\terminalobj}{\mathbf{1}}
\newcommand{\sem}[1]{\llbracket #1 \rrbracket}
\newcommand{\semmval}{\mathbf{M}}
\newcommand{\semselect}{\mathbf{select}}
\newcommand{\semGamma}{\mathbf{\Gamma}}
\newcommand{\semnabla}{\boldsymbol{\nabla}}
\newcommand{\bA}{\mathbf{A}}
\newcommand{\bK}{\mathbf{K}}
\newcommand{\dR}{\mathsf{R}}
\newcommand{\dRc}{\tilde{\mathsf{R}}}
\newcommand{\dK}{\mathsf{K}}
\newcommand{\dN}{\mathsf{N}}
\newcommand{\dZ}{\mathsf{Z}}
\newcommand{\dC}{\mathsf{C}}
\newcommand{\Type}{\mathsf{Type}}
\newcommand{\ttrue}{\mathsf{True}}
\newcommand{\tfalse}{\mathsf{False}}
\newcommand{\Prop}{\mathsf{Prop}}
\newcommand{\kleene}{\textsf{K}}
\newcommand{\unit}{\textsf{unit}}
\newcommand{\mval}{\mathsf{M}}
\newcommand{\cval}{{\mathsf{P}_+}}
\newcommand{\mult}{\mathsf{mult}}
\newcommand{\lift}{\mathsf{lift}}
\newcommand{\nablaunit}{\textsf{unit}_{\nabla}\!}
\newcommand{\nablalift}[1]{\mathrel{{#1}^{\dagger_{\nabla}}}}
\newcommand{\pic}{\mathsf{picture}}
\newcommand{\picc}{\mathsf{pic}}
\newcommand{\all}[1]{\mathrm{\Pi}( #1).\ }
\newcommand{\allx}{\mathrm{\Pi}}
\newcommand{\some}[1]{\mathrm{\Sigma}( #1).\ }
\newcommand{\somex}{\mathrm{\Sigma}}
\newcommand{\csome}[1]{\exists(#1 ).\ }
\newcommand{\usome}[1]{\exists!(#1 ).\ }
\newcommand{\csomex}{\exists}
\newcommand{\lam}[1]{\lambda ( #1 ).\ }
\newcommand{\coprodp}[2]{ #1  + #2}
\newcommand{\coprodc}[2]{ #1  + #2}
\newcommand{\defined}[1]{#1 \!\downarrow}
\newcommand{\ctrue}{\textsf{true}}
\newcommand{\cfalse}{\textsf{false}}
\newcommand{\select}{\textsf{select}}
\newcommand{\countablelift}{\mathsf{\omega lift}}
\newcommand{\upc}[1]{\lceil #1 \rceil}
\newcommand{\downc}[1]{\lfloor #1 \rfloor}
\newcommand{\semidec}{\textsf{semiDec}}
\newcommand{\singletonextract}{\textsf{elimM}}
\newcommand{\mto}{\rightrightarrows}
\newcommand{\relator}{\textsf{relator}}
\newcommand{\code}[1]{\text{\texttt{#1}}}
\newcommand{\kneg}{\hat\neg}
\newcommand{\klor}{\mathrel{\hat{\lor}}}
\newcommand{\kland}{\mathrel{\hat{\land}}}
\newcommand{\relaterule}{\textsc{Relate}}
\newcounter{mycounter}
\newcommand\letvdash[1]{\mathrel{
  \stackengine{1ex}{\vdash}{\scriptscriptstyle#1}{O}{c}{F}{T}{L}}}
\newcommand{\classicaltyping}{\letvdash{\sim}}
\newcommand{\constructivetyping}{\letvdash{}}
\newtheorem{lemma}{Lemma}
\newtheorem{theorem}{Theorem}
\newtheorem{metatheorem}{Metatheorem}
\theoremstyle{remark} 
\newtheorem{remark}{Remark}
\newtheorem{fact}{Fact}
\theoremstyle{definition} 
\newtheorem{example}{Example}
\newtheorem{definition}{Definition}
\newtheorem{corollary}{Corollary}
\tikzset{
    labl/.style={anchor=south, rotate=90, inner sep=.5mm}
}
\providecommand{\keywords}[1]
{
  \small	
  \textbf{\textit{Keywords---}} #1
}
\title{
Extracting efficient exact real number computation from proofs in constructive type theory\thanks{Holger Thies is supported by JSPS KAKENHI Grant Number JP20K19744.
 Sewon Park is supported by JSPS KAKENHI Grant number JP18H03203.
 \includegraphics[scale=0.04]{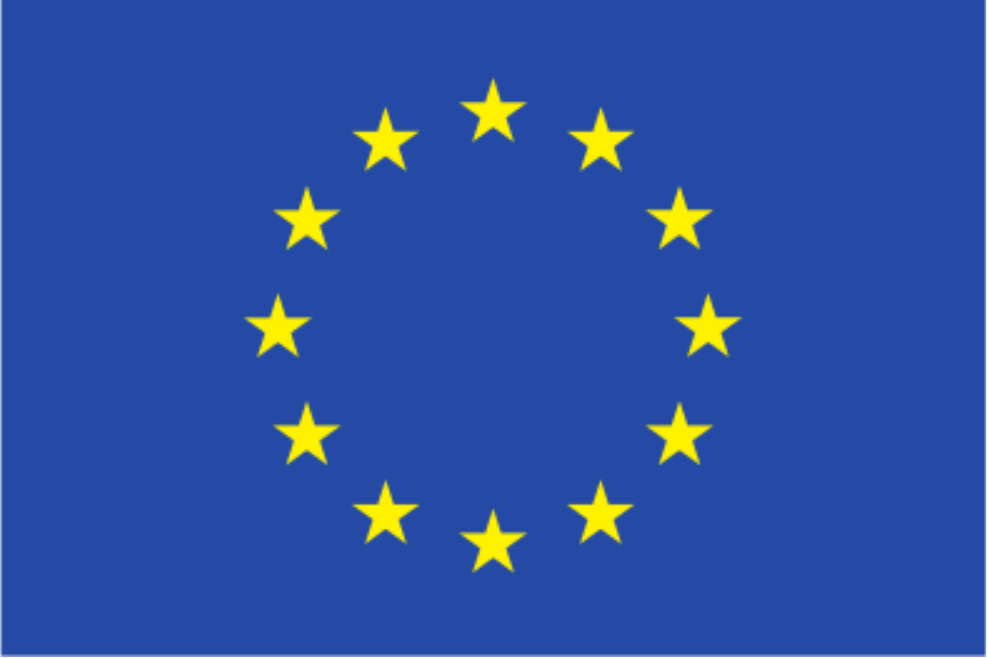} This project has received funding from the EU’s Horizon 2020 research and innovation programme under the Marie Skłodowska-Curie grant agreement No 731143.
 The authors thank Franz Brau{\ss}e and Norbert M\"{u}ller for helpful discussions.}}
\author{
Michal Kone\v{c}n\'{y}$^1$,
Sewon Park$^2$,
Holger Thies$^2$ \\
\small $^{1}$Aston University, UK\\
\small $^{2}$Kyoto University, Japan
}
\date{}
\begin{document}

\maketitle

\begin{abstract}
Exact real computation is an alternative to floating-point arithmetic where operations on real numbers are performed exactly, without the introduction of rounding errors.
When proving the correctness of an implementation, one can focus solely on the mathematical properties of the problem without thinking about the subtleties of representing real numbers.
We propose a new axiomatization of the real numbers in a dependent type theory with the goal of extracting certified exact real computation programs from constructive proofs.
Our formalization differs from similar approaches, in that we formalize the reals in a conceptually similar way as some mature implementations of exact real computation.
Primitive operations on reals can be extracted directly to the corresponding operations in such an implementation, producing more efficient programs.
We particularly focus on the formalization of partial and nondeterministic computation, which is essential in exact real computation.

We prove the soundness of our formalization with regards of the standard realizability interpretation from computable analysis and show how to relate our theory to a classical formalization of the reals.
We demonstrate the feasibility of our theory by implementing it in the Coq proof assistant and present several natural examples.
From the examples we have automatically extracted Haskell programs that use the exact real computation framework AERN for efficiently performing exact operations on real numbers. 
In experiments, the extracted programs behave similarly to  native implementations in AERN in terms of running time and memory usage.
\end{abstract}

\keywords{Constructive Real Numbers, Formal Proofs, Exact Real Number Computation, Program Extraction.}

\tableofcontents
\section{Introduction}
Exact real computation is an elegant programming paradigm in which real numbers and other continuous mathematical structures are treated as basic entities that can be manipulated exactly.
This is usually realized by adding a datatype for reals and arithmetic operations on them as primitives in programming languages.
All computations on real numbers are done exactly, without introducing rounding errors, and the implementation usually allows to output finite approximations of real numbers up to any desired absolute precision.
Although the approach is less efficient than optimized hardware-based floating-point calculations, implementations in exact real computation are by design more reliable than the former which is desirable in safety-critical application domains where correctness is more important than efficiency.
Furthermore, comparably efficient implementations exist that try to minimize the additional overhead \cite{irram,balluchi2006ariadne,konecny2008aern}. 
These implementations are successfully used to approximate numerical values up to very high precision in practice.

Proof assistants and automatic verification methods are increasingly used in computer science to verify the correctness of hardware and software \cite{ringer2020qed}.
Numerical computations, often at the intersection between software and the physical world, are in many cases safety-critical and thus their verification is a very active area of research \cite{melquiond2008proving,boldo2009,boldo2011flocq,boldo2019round,gallois2020optimal}.
Verifying floating point programs is, however, challenging as one not only has to make sure that the mathematical statements are correct, but also needs to deal with issues of the approximate implementation of real numbers  \cite{monniaux2008pitfalls,DBLP:books/daglib/0041425}.
In exact real computation, on the other hand, the implementation behaves closely to the mathematical theory, and one can therefore solely focus on the purely mathematical statements, which facilitates the process of formal verification.

When it comes to formally verifying programs, one can broadly distinguish between two main approaches  \cite{chlipala2013certified}.
The first one is classical program verification, that is to first write the program and then, in a second step, formally verify that it meets some specification. 
The second one is to extract programs directly from constructive proofs, that is, to integrate programming, specification and verification in a single step.
Arguably, the second approach is more elegant and, due to peculiarities of computing with real numbers, might even be more natural when dealing with exact real computation. 
Indeed, implementations of constructive real numbers exist in many modern proof assistants (see e.g. \cite{BLM16} for an overview), and there are numerous works solely dealing with the extraction of exact real computation programs, e.g. \cite{miyamoto2015program,DBLP:journals/apal/BergerT21}.
Still, giving up on writing programs directly, some control over the exact form of the algorithm is lost and thus fine-grained optimization becomes more difficult.
Results regarding extraction of exact real computation programs are therefore mostly theoretical and do not produce code that is efficient enough to be used in practice \cite{Mueller18}. 
Moreover, subtleties like partiality, concurrency and nondeterministic computation that arise naturally and are unavoidable in exact real computation have been less studied in the context of program extraction \cite{berger2016extracting}.

Recently, the program verification approach has therefore also gained popularity, in particular in the computable analysis community.
For example, \cite{park2016foundation} formalizes a simple imperative programming language for exact real computation as a framework for program verification and \cite{steinberg2019quantitative} formalizes some of the theory of classical computable analysis to verify functional implementations of real number computations in the Coq proof assistant.

In this work, we suggest a different approach to overcome some of the above mentioned problems of program extraction.
Instead of formalizing a concrete implementation of constructive real numbers, we suggest a high-level axiomatization of a constructive real number type, formalizing real numbers conceptually similarly to how they are treated in some of the more efficient implementations of exact real computation.
The goal is to extract programs where the constructive real type, and its axiomatically defined basic operations such as arithmetic or limits, are directly mapped to corresponding types and operations in an exact real computation framework.
That is, instead of reimplementing and verifying basic real number operations, we trust the implementation of a core of simple real number operations and to verify programs using those operations under the assumption that they are correctly implemented.
Concretely, utilising this mapping and program extraction techniques, we obtain certified programs over an implementation of exact real computation from correctness proofs.
The approach also provides a certain amount of independence of the concrete implementation of real numbers and thus allows to easily switch the underlying framework.

Some preliminary results of this work have been presented in \cite{wollic}.
In this paper, we significantly extend the results.
For example, we suggest a different set of axioms for the nondeterminism monad that is expressive enough to define various notions of nondeterministic completeness useful in practical applications.
We also expand several of the other sections and provide more details and additional examples.

\subsection{Overview and main contributions}
In this paper, we define an axiomatization of the reals in a constructive dependent type theory that allows program extraction.
Our axioms model real numbers in a conceptually similar way as mature implementations of exact real computation.
We further allow a certain amount of classical logic in proofs that do not have computational content. 
As we introduce several axioms in our theory, an important question is how we make sure that this kind of formalization is sound.
We argue that our formalization is sound by extending a realizability interpretation in the category of assemblies over Kleene's second algebra \cite{seely_1984,10.1007/BFb0022273}.
The theoretical framework for this kind of reasoning is given by computable analysis and the theory of representations \cite{kreitz1985theory,w00}.
We describe the necessary background in Section~\ref{s:background}.

As computable operations are necessarily continuous, nondeterminism is central in exact real computation.
We model nondeterministic computation as a monad that we introduce in Section~\ref{s:nondeterminism}.
For any type $X : \Type$,  $\mval X : \Type$ stands for the type of nondeterministic computations in $X$.
The idea is that when proving a theorem of the form $\all{x : \dR}P\,x \to \mval\some{y : \dR}Q\,x\,y$, a user automatically gets
an exact real computation program that for any input $x \in \IR$ such that $P(x)$ holds, nondeterministically computes $y \in \IR$ such that $Q(x, y)$ holds.

In Section~\ref{s:axiomatization} we define our real number axiomatization.
One of the defining features of exact real computation, making it more powerful than e.g. symbolic or algebraic computation, is the ability to construct real numbers by limits of certain user-defined sequences \cite{Bra03f,neumann2018topological}.
While axiomatizing the limit operation itself is not difficult, the situation becomes more involved for the case
when nondeterminism is present in the limit computation, and how to define such a nondeterministic limit operation has been under debate.
In the current work we specify \emph{nondeterministic dependent choice}, a simple and natural principle of the nondeterminism itself that makes the limit operation suggested in \cite{Mueller18} and some other forms of nondeterministic limits derivable.
We also suggest some simple extensions to complex numbers and general Euclidean spaces.

After defining this axiomatization, in Section~\ref{ss:axiomatization:realizability} we prove its soundness by extending the realizability interpretation from Section~\ref{ss:background:typetheory}.

As there are already many formulations of classical real numbers in proof assistants, from a practical point of view it would be desirable to transfer some of the proofs done over a classical axiomatization to our setting.
We address this in Section~\ref{s:relator} by defining a $\relator$ operation that lets us relate classical and constructive results in our type theory.

Finally, in Section~\ref{s:examples} we present several larger examples from computable analysis.

We furthermore implemented all of the theory (apart from the soundness in Section~\ref{ss:axiomatization:realizability}) in the Coq proof assistant.
We describe the details of the implementation in Section~\ref{s:implementation}.
The implementation allows to extract Haskell programs from our proofs using Coq's code extraction mechanism.
In the extracted programs, primitive operations on the reals are mapped to operations in the exact real computation framework AERN \cite{konecny2008aern} which is written and maintained by one of the authors.
We extracted code for several of the examples and measured its performance. We report that the extracted programs perform efficiently, having only a small overhead compared to hand-written implementations.

\section{Background}\label{s:background}
Before describing our axiomatization let us briefly summarize the necessary background from computable analysis and define the type theoretical setting in which we present our results.
\subsection{Computable analysis}\label{ss:computable analysis}
To compute over uncountable mathematical structures such as real numbers exactly, computable analysis takes \emph{assemblies} over Kleene's second algebra (assemblies for short in this paper) as the basic data type \cite{seely_1984,10.1007/BFb0022273,van2008realizability}.\footnote{Assemblies are generalizations of represented sets \cite{kreitz1985theory,w00} which are exactly the assemblies where the surjective relations are required to be partial surjective functions.
The terminology \emph{multi-representation} \cite{schroder2002effectivity} may be more familiar to some readers.}
An assembly is a pair of a set $X$ and a relation $\Vdash\, \subseteq \Baire\times X$, which is surjective in that $\forall x\in X.\ \exists \varphi.\ \varphi \Vdash x$ holds, in infix notation.
We call $\varphi \in \Baire$ a realizer of an abstract entity $x \in X$ if $\varphi \Vdash x$ holds.
We often write assemblies 
in boldface capitalized letters $\mathbf{X}, \mathbf{Y}, \cdots$. For an assembly $\mathbf{X}$, 
we write $|\mathbf{X}|$ for the underlying set of $\mathbf{X}$ and $\Vdash_\mathbf{X}$ for the 
realization relation of $\mathbf{X}$.
We often write $x \in \mathbf{X}$
for $x \in |\mathbf{X}|$ to simplify the presentation.

Given two assemblies $\mathbf{X}$ and $\mathbf{Y}$, a function $f : |\mathbf{X}| \to |\mathbf{Y}|$ 
is said to be computable if there is a computable (in the sense of type-2 computablity \cite{w00})
partial function $\tau :\subseteq \Baire \to \Baire$ that tracks $f$, i.e.\ 
for any $x \in \mathbf{X}$ and its realizer $\varphi$, $\tau(\varphi)$ is a realizer of $f(x)$.
Collecting assemblies as objects and computable functions between them as morphisms, 
the category of assemblies $\Casm$ forms a quasitopos.

An example of an assembly is the Cauchy assembly $\bR_\text{Cauchy}$ over the reals where $\varphi \in \Baire$ 
gets to be a realizer of $x \in \IR$ if and only if $\varphi$ encodes a sequence of rationals converging rapidly towards $x$.
That is, when we specify $\eta_\IQ : \IN \to \IQ$ to be the underlying encoding of rationals,
it holds that
\[
|x - \eta_\IQ(\varphi(n)) | \leq 2^{-n} \iff \varphi \Vdash_{\bR_\text{Cauchy}} x.
\]
Under the realization relation, the field arithmetic of reals is computable.
That means
the arithmetic operations get computed exactly without any rounding errors by the definition
of the realization relation.

An inevitable side-effect of this approach is partiality.
Consider comparing the order of two real numbers from the Cauchy assembly.
The naive procedure of iterating through the entries of the realizers of the two real numbers
will fail to terminate when the two real numbers are identical.
In fact, whichever realizability relation of reals we take, 
comparisons of real numbers are only partially computable \cite[Theorem~4.1.16]{w00}. 

A partial function $f :\subseteq X \to Y$ is called computable
with regards to the realization relations $\Vdash_X$ and $\Vdash_Y$ at its restricted domain $S \subseteq X$
if it, as a total function $f : S \to Y$, is computable with regards to $\Vdash_{S \subseteq X}$ and $\Vdash_Y$.
The realization relation $\Vdash_{S \subseteq X}$ on $S$ is a relation induced by the subset embedding 
from $\Vdash_X$ by 
\[
\varphi \Vdash_{S \subseteq X} x :\Leftrightarrow
\varphi \Vdash_X x
\]
With regards to the Cauchy assembly, the order comparison $<$ as a binary function to the Booleans is only partially computable when it is not defined at $\Delta_{\IR} := \{(x, x) \mid x \in \IR\}$.

Classically, partial functions $f :\subseteq X \to Y$ are often handled as a total function to a lifted codomain $f' : X \to Y\cup\{\bot\}$ where the new codomain contains an extra value $\bot$ which denotes 
that the function received a non-valid input. To be precise, let 
\[ f' := x \mapsto \begin{cases} f(x) & \text{if } x \in S \\ \bot &\text{otherwise,}\end{cases} \]
where $S \subseteq X$ is the actual domain of the partial function $f :\subseteq X \to Y$.
However, in order to do the same in our setting, to make the lifted codomain an assembly,
we need to consider the computational behavior of $\bot$ explicitly as we need to define the realizers of $\bot$.
It means, up to a way of defining the realizers of partialities, in this setting, partial functions can be classified by their computational behavior when they accept non-specified inputs \cite[Ch.~4]{longley1995realizability}.

Consider, for example, the problem of comparing the order of real numbers. 
Regarding it as a computable partial function 
that is not defined at $\Delta_\IR$, following the naive procedure of disginguishing the real numbers from their realizers, 
we can say that the computational behavior of the function when it receives inputs from $\Delta_\IR$ is diverging, e.g., non-termination.
Let \emph{Kleenean} $\bK$ be the assembly of  $\{\cff, \ctt, \bot\}$ where an infinite sequence of zeros realizes $\bot$, an infinite sequence that starts with $1$ after a finite prefix of zeros realizes $\cff$, and an infinite sequence that starts with $2$ after a finite prefix of zeros realizes $\ctt$ (see e.g. \cite[Example~3]{konevcny2020computable}). 
The assembly $\bK$ can be seen as a generalization of the Booleans
by adding an explicit state of divergence $\bot$.
Comparison in the Cauchy assembly is computable as a function to $\bK$:
\[
x < y := \begin{cases}
\ctt &\text{if } x < y, \\
\cff&\text{if } y < x,\\
\bot&\text{otherwise.}
\end{cases}
\]

That means, $\bK$ can be used to classify partial Boolean functions whose specification at non-specified inputs is to diverge (nontermination), including the order comparison of Cauchy reals.
As the name suggests, the assembly $\bK$ makes the Kleene three-valued logic operations, shown in Fig.~\ref{t:erc:kleene},  computable.
\begin{figure}
    \centering
    \begin{tabular}{c||c|c|c}
$\kland$ & $\ctt$ & $\cff$ & $\bot$ \\\hline\hline
$\ctt$   & $\ctt$ & $\cff$ & $\bot$ \\\hline
$\cff$   & $\cff$ & $\cff$ & $\cff$ \\\hline
$\bot$   & $\bot$ & $\cff$ & $\bot$ 
    \end{tabular}
    \hfill
    \begin{tabular}{c||c|c|c}
$\klor$ & $\ctt$ & $\cff$ & $\bot$ \\\hline\hline
$\ctt$   & $\ctt$ & $\ctt$ & $\ctt$ \\\hline
$\cff$   & $\ctt$ & $\cff$ & $\bot$ \\\hline
$\bot$   & $\ctt$ & $\bot$ & $\bot$ 
    \end{tabular}
    \hfill
    \begin{tabular}{c||c|c|c}
$\kneg$ & $\ctt$ & $\cff$ & $\bot$ \\\hline\hline
  & $\cff$ & $\ctt$ & $\bot$ 
    \end{tabular}
    \label{t:erc:kleene}
\caption{The Kleene three-valued logic operations
where $\bot$ represents undefinedness.}
\end{figure}

An assembly of reals satisfying the below computability conditions is called \emph{effective}.
\begin{fact}
\label{f:erc}
For real numbers, there is a unique assembly up to isomorphism that satisfies the computability-theoretic structure of real numbers  \cite{DBLP:journals/mlq/Hertling99}:
\begin{enumerate}
    \item The constants $0, 1\in \IR$ are computable.
    \item The arithmetical operations $+, -, \times$ are computable.
    \item The multiplicative inversion $/$ is computable as a partial function that is not defined at $0$.
    \item The order relation $<$ as a partial function that is not defined at $\{(x, x) \mid x \in \IR\}$ is computable. However, its lifted version in $\bK$ is computable.
    \item The limit operation defined at rapidly converging sequences is computable.
\end{enumerate}
\end{fact}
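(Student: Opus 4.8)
The existence part is witnessed by the Cauchy assembly $\bR_\text{Cauchy}$ itself: under its realization relation the field arithmetic is computable and order comparison is computable as a map into $\bK$, and a routine diagonal argument shows that the limit of rapidly converging sequences is computable as well (take the $n$-th rational of the limit to be the $(n+1)$-st approximation supplied for the $(n+1)$-st member of the sequence), so $\bR_\text{Cauchy}$ satisfies conditions 1--5. The real content is uniqueness, so the plan is as follows. I read ``an assembly satisfying the structure of $\IR$'' as an assembly $\mathbf{R}$ whose underlying set is $\IR$ and for which $0,1,+,-,\times,/,<,\lim$ are computable in the senses of 1--5. I would show that for any such $\mathbf{R}$ the identity function $\id:\IR\to\IR$ is computable both as a morphism $\bR_\text{Cauchy}\to\mathbf{R}$ and as a morphism $\mathbf{R}\to\bR_\text{Cauchy}$; being literally the identity on $\IR$ it is then a structure-preserving isomorphism in $\Casm$, and any two such assemblies $\mathbf{R}_1,\mathbf{R}_2$ are isomorphic through $\mathbf{R}_1\cong\bR_\text{Cauchy}\cong\mathbf{R}_2$.

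For the direction $\bR_\text{Cauchy}\to\mathbf{R}$ I would first note that rationals are uniformly computable in $\mathbf{R}$: given a code for $q\in\IQ$ as a pair of integers, build numerator and denominator inside $\mathbf{R}$ from the computable constants $0,1$ using $+$ and $-$, then divide (condition 3, applied only at a nonzero denominator). A realizer of $x$ in $\bR_\text{Cauchy}$ is a code for a rational sequence $(q_n)$ with $|x-q_n|\le 2^{-n}$; mapping it termwise through the previous step gives a sequence of $\mathbf{R}$-realizers converging rapidly to $x$, and feeding it to the computable limit (condition 5) returns an $\mathbf{R}$-realizer of $x$.

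The harder direction is $\mathbf{R}\to\bR_\text{Cauchy}$. Given an $\mathbf{R}$-realizer $\varphi$ of $x$, I would first extract a crude bound by running in parallel, for $j=0,1,2,\dots$, the Kleenean computations $b_j:=(-2^j<x)\kland(x<2^j)$ --- evaluating each comparison in $\bK$ by condition 4, using that the integers $\pm 2^j$ are computable in $\mathbf{R}$ and that $\kland$ is computable on $\bK$; since $b_j=\ctt$ once $2^j>|x|$, this halts with some $M$ such that $|x|<2^M$. Then, for each precision $n$, I would take the dyadic grid $q_k:=k\cdot 2^{-(n+1)}$ over the finitely many $k$ with $|q_k|\le 2^M+1$, and dovetail over $k$ the evaluations of the two Kleeneans $(q_{k-1}<x)$ and $(x<q_{k+1})$; as soon as some $k$ is found for which both return $\ctt$, output $q_k$ as the $n$-th term. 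Collecting these over all $n$ yields a rapidly converging rational sequence, i.e.\ a $\bR_\text{Cauchy}$-realizer of $x$.

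I expect the main obstacle to be the termination and correctness analysis of this last parallel search. One must check that a ``good'' grid index $k$ with $q_{k-1}<x<q_{k+1}$ genuinely exists in the chosen finite range --- including the boundary case $x=q_m$, for which $k=m$ works --- that for any such $k$ both Kleenean evaluations really do terminate (necessarily with $\ctt$, so the search halts), that the dovetailing is organised so that it cannot get stuck on a cell whose comparison diverges, and that any $k$ actually output satisfies $q_{k-1}<x<q_{k+1}$ and hence $|x-q_k|<2^{-(n+1)}\le 2^{-n}$. Everything else amounts to routine manipulation of realizers, and the argument as a whole is the assembly-level rehearsal of the uniqueness theorem of Hertling cited in the statement.
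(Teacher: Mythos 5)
Your proposal is correct, but note that the paper itself does not prove this Fact at all: it is imported by citation from Hertling's effective categoricity theorem, which is precisely the statement you reconstruct. Your two-way reduction is the standard proof of that result and it goes through. The direction $\bR_\text{Cauchy}\to\mathbf{R}$ via a uniform embedding of the rationals (built from $0,1,+,-,\times$ and division at a manifestly nonzero denominator) followed by the limit operator is fine, up to the routine index shift needed to match whichever precise notion of ``rapidly converging'' condition 5 uses. The harder direction $\mathbf{R}\to\bR_\text{Cauchy}$, via a dovetailed search for a bound $2^M>|x|$ using the Kleenean tests $(-2^j<x)\kland(x<2^j)$ and then, for each $n$, a dovetailed search over the finite dyadic grid for a $k$ with both $(q_{k-1}<x)$ and $(x<q_{k+1})$ evaluating to $\ctt$, terminates and is correct exactly for the reasons you list: a good $k$ always exists in the chosen range (including when $x$ lies on the grid), its two strict comparisons both converge to $\ctt$, and any returned $k$ satisfies $|x-q_k|<2^{-(n+1)}$. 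It is worth observing that your realizer-level construction is the external counterpart of the paper's internal argument in Section~\ref{ss:axiom-equivalence}: there the equivalence of any two axiomatic real types is proved inside the type theory by nondeterministic rounding (Archimedean property plus Markov's principle) producing dyadic approximations, followed by a limit — your parallel grid search plays the role of that nondeterministic rounding, with the nondeterminism monad replaced by explicit dovetailing over realizers. So there is no gap; you have supplied an explicit assembly-level proof of a statement the paper handles purely by citation.
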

A concrete example of such an assembly of reals is, of course, the Cauchy reals. 
Hence, any effective axiomatization of real numbers has to capture this computational structure.

As the usual comparisons are partial, nondeterminism becomes essential in exact real computation  \cite{LUCKHARDT1977321}.
For two assemblies of $X$ and $Y$,
a multivalued function $f : X \mto Y$, which is basically a nonempty set-valued function, is computable if there is a computable function 
 that takes a realizer $\varphi$ of $x \in X$ and computes a realizer of any $y \in f(x)$. An example is the multivalued soft comparison \cite{BRATTKA1998490}:
\[
x <_k y = \{\ctt \mid x < y + 2^{k}\} \cup \{\cff \mid y < x + 2^{k}\}.
\]
The above total multivalued function approximates the order relation.
It is tracked by evaluating two partial comparisons $x < y + 2^{k}$ and $y < x + 2^{k}$ in parallel, returning $\ctt$ if $x < y + 2^{k} = \ctt$, and $\cff$ if $y < x + 2^{k} = \ctt$.
It is nondeterministic in the sense that for the same $x$ and $y$, but with different realizers, which of the tests terminates first may vary.
Exact real number computation software such as 
\cite{irram,konecny2008aern} further offer
operators like $\semselect :\subseteq \bK \times \bK \mto \mathbf{2}$ such that $\semselect(k_1, k_2) \ni \ctt$
iff $k_1 = \ctt$ and $\semselect(k_1, k_2) \ni \cff$
iff $k_2 = \ctt$ as a primitive operation for generating nondeterminism. Here, $\mathbf{2}$ is the canonical Boolean assembly.

\subsection{Type theory and realizability}
\label{ss:background:typetheory}
Type theory is a foundation of mathematics consisting of rules for forming new types and constructing new terms of the types.
In this paper, we work with a simple extensional dependent type theory which admits categorical semantics, i.e. a realizability interpretation,
in $\Casm$ \cite{streicher2012semantics,jacobs1999categorical,reus1999realizability}.
That is, each closed type $X$ is mapped to an assembly $\sem{X}$ and each 
closed term $x : X$ is mapped to a computable point / function $\sem{x} : \mathbf{1} \to \sem{X}$. 
Hence, the type theory can be used as a logical language of computable analysis.

We work with a type theory that provides the basic types:
$\mathsf{0}, \mathsf{1}, \mathsf{2}, \dN, \dZ$, an \`{a} la Russel universe of classical propositions $\Prop$, and an \`{a} la Russel universe of types $\Type$ (we omit the technical details necessary to avoid the usual paradoxes by e.g. imposing a hierarchy of types).
We interpret $\mathsf{0}$ as the initial assembly $\mathbf{0}$, the assembly of the empty set,
$\mathsf{1}$ as a final assembly, the trivial assembly of a singleton set $\mathbf{1}$,
$\mathsf{2}$ as the canonical assembly over $\{\ctt, \cff\}$,
$\mathsf{N}$ as the canonical assembly over $\IN$, 
and
$\mathsf{Z}$ as the canonical assembly over $\IZ$.
Here, triviality means that the underlying 
realization relation is the entire set.
We also assume sufficient enough operators over the
basic types and interpret them appropriately to the 
computable functions in $\Casm$ including 
primitive recursors over the finite and countable types.

The interpretations of type formations 
are done in a standard way.
For two types $X$ and $Y$, the (co)product type
($X+Y$) 
$X \times Y$
is interpreted as 
the (co)product of the interpretations of $X$ and $Y$. 
The function type $X \to Y$ is interpreted as 
the exponentiation of the interpretation of $X$ 
over the interpretation of $Y$.
Given a family of assemblies $\mathcal{F} : |\mathbf{X}| \to \Casm$ indexed by $|\mathbf{X}|$,
consider the two assemblies:
$\pi(\mathcal{F})$ of
the set of dependent functions $\{f : |\mathbf{X}| \to \bigcup_x |\mathcal{F}(x)| \mid 
\forall x.\; f(x) \in |\mathcal{F}(x)|\,\}$
and $\sigma(\mathcal{F})$ of 
the set of dependent pairs $\{(x, y) \mid x \in |\mathbf{X}| \land y \in |\mathcal{F}(x)|\,\}$.
The realization relations are defined by
$
\varphi \Vdash_{\pi(\mathcal{F})} 
f :\Leftrightarrow
\forall \psi \Vdash_\mathbf{X} x.\; 
\eta_\varphi(\psi) \Vdash_{\mathcal{F}(x)} f(x)
$
and
$
\langle \varphi_1, \varphi_2 \rangle \Vdash_{\sigma(\mathcal{F})} (x, y)
:\Leftrightarrow 
\varphi_1 \Vdash_\mathbf{X} x \land
\varphi_2 \Vdash_{\mathcal{F}(x)} y
$.
Here, $\eta_\phi$ denotes the underlying universal 
type-2 Turing machine indexed by $\phi\in \Baire$
and $\langle\varphi_1, \varphi_2\rangle \in \Baire$
denotes the Cantor paring of $\varphi_1, \varphi_2 \in \Baire$.
More details of the type-2 computability theory
can be found at \cite{w00}.

When we have a type family 
$P : X \to \Type$, or $Q : X \to \Prop$,
we interpret its $\allx$-type
$\all{x : X}P\,x$, or $\all{x : X}Q\,x$,
as the $\pi$-assembly of the family of assemblies 
expressed by $P$, or $Q$, respectively.
And, its $\somex$-type
$\some{x : X}P\,x$,
or
$\some{x : X}Q\,x$, is interpreted 
as the $\sigma$-assembly of the family of assemblies 
expressed by $P$, or $Q$, respectively.

We regard $\Prop$ as a universe of classical propositions. 
We interpret $\Prop$ as the trivial assembly of $\{\mathbf{0}, \mathbf{1} \}$
where $\mathbf{0}$ corresponds to $\tfalse : \Prop$
and $\mathbf{1}$ corresponds to $\ttrue : \Prop$.
We further suppose that $\Prop$ is a type universe that is closed under $\to, \times, \allx$. 
We often emphasize the property of $\Prop$ that it is a universe of (classical) propositions, 
by writing $\Rightarrow, \land, \forall$
instead of $\to, \times, \allx$.
However, note that $\Prop$ is not closed under 
$+$ and $\somex$. We assume the $\Prop$-level 
counterpart of the constructions $\lor$ and
$\csomex$
where $\neg\neg(X + Y) \simeq X \lor Y$
and $\neg\neg\some{x : X}Y \simeq \csome{x : X}Y$
get provable. Here, 
$\simeq$ denotes type-theoretic equivalent which 
in our case is simply category-theoretic isomorphism.

In order to make $\Prop$ classical, we assume the (classical) law of excluded middle 
\[ \all{P : \Prop} P \lor \neg P \label{a:TT1} \tag{Axiom TT1}\]
(where $\neg P :\equiv P \to \tfalse$), the (classical) propositional extensionality
\[ 
\all{P, Q : \Prop} (P\leftrightarrow Q) \to P = Q \label{a: TT2} \tag{Axiom TT2}
\](where $P \leftrightarrow Q:\equiv (P \to Q) \times (Q \to P)$),  
and the (classical) countable choice of the form
\begin{multline}
\all{X : \Type}\all{P : \dN \to X \to \Prop}
(\all{n : \dN} \csome{x : X}P\,n\,x) \to \\ \csome{f : \dN \to X}\all{n:\dN}P\,n\,(f\,n).
\label{a: TT3} \tag{Axiom TT3}
\end{multline}
Here, and on all later occasions when we introduce axioms, we sometimes just write the type of the axiom instead of the full axiom, by which we mean that the type is inhabited.
Note that the axioms are validated under the interpretation in $\Casm$.

We assume that the identity types $=$ are in $\Prop$
where we have extensionality, that we interpret $x = y$
as $\mathbf{0}$ when the interpretations of $x$ and $y$ are not identical
and as $\mathbf{1}$ when the interpretations are identical.

We also assume the general functional extensionality 
\begin{multline} \all{X : \Type}\all{P : X \to \Type}\all{f, g : \all{x : X}P\,x} \\
\big(\all{x : X}f\,x = g\,x\big) \to f = g \label{a: TT4} \tag{Axiom TT4}
\end{multline}
and the Markov principle
\begin{multline}
\all{f : \dN \to \Prop}\big(\all{n:\dN}(f\,n) + \neg(f\,n)\big) \\
\to (\csome{n : \dN} f\, n) \to \some{n : \dN} f\, n. \label{a: TT5} \tag{Axiom TT5}
\end{multline}

\section{Nondeterminism}\label{s:nondeterminism}
We propose an axiomatization of the nondeterminism considered in computable analysis as a monad $\mval : \Type \to \Type$ in our type theory
such that $\mval X : \Type$ stands for the type of nondeterministic computations in $X : \Type$.
The monad will be used intensively as, for example,
$f : A \to \mval B$ to denote a computable multivalued function 
$A \mto B$,
$\mval(A + B)$ to denote a nondeterministic decision procedure deciding either $A$ or $B$, 
and $\mval\some{x : A}B$ to denote a nondeterminstic existence of $x : A$ such that $B(x)$.

The aim is to achieve the interpretation that when we have a closed term of type $\all{x : A}P(x) \to \mval\some{y : B}Q\,x\,y$,
it automatically gives us a nondeterministic program 
that computes the partial multivalued function
\[
f :\subseteq \sem{A} \mto \sem{B}
\]
which is defined on $\{x \mid \sem{P}(x)\}$
satisfying $\forall y \in f(x).\; \sem{Q}(x, y)$.

The goal of this section is to propose a set of axioms that equips the type theory
with the monad. As in computable analysis basic nondeterminism is often introduced through Kleeneans, 
we first axiomatize the assembly $\dK$ in our type theory.
Then, believing that the type theory must be rich enough to do
classical reasoning on the nondeterministic values, we characterize 
the nondeterminism monad as a subobject of the classical non-empty power-set monad, 
which can be constructed within the type theory.
The latter part of the interpretation is dealt with in Section~\ref{ss:axiomatization:realizability}.


\subsection{Axiomatizing Kleeneans}
To axiomatize the Kleenans as described in Section~\ref{s:background}, we assume that there is a type $\dK : \Type$ of Kleeneans and that there are two \emph{distinct}  elements $\ctrue : \dK$
and $\cfalse : \dK$. 
That is, we introduce the following four axioms.
\begin{flalign}
& \label{a:K1} \kleene : \Type \tag{Axiom K1} \\ 
& \label{a:K2} \ctrue : \kleene \tag{Axiom K2}  \\
& \label{a:K3} \cfalse : \kleene \tag{Axiom K3}  \\
& \label{a:K4} \ctrue \neq \cfalse \tag{Axiom K4}  
\end{flalign}

Let us define the abbreviations
$
\upc{t} : \Prop :\equiv t = \ctrue
$ and
$ 
\downc{t} : \Prop :\equiv t = \cfalse.
$
For a Kleenean $t : \dK$, we further write $\defined{t}$ if $t$ is defined, i.e. 
for $\upc{t} \lor \downc{t} : \Prop$.
This allows us to introduce the standard logical operators (negation, conjunction and disjunction) on Kleeneans which follows the three-valued logic:
\begin{flalign}
&    \label{a:K5} \kneg : \kleene \to \kleene \tag{Axiom K5}\\
&    \label{a:K6} \klor : \kleene \to \kleene \to \kleene \tag{Axiom K6}\\
&    \label{a:K7} \kland : \kleene \to \kleene \to \kleene \tag{Axiom K7}\\
&    \label{a:K8} \upc{\kneg x} = \downc{x} \text{ and } \downc{\kneg x} = \upc{x} \tag{Axiom K8}\\
&    \label{a:K9} \upc{x \kland y} = (\upc{x} \land \upc{y}) \text{ and } \downc{x \kland y} = (\downc{x} \lor \downc{y}) \tag{Axiom K9} \\
&    \label{a:K10} \upc{x \klor y} = (\upc{x} \lor \upc{y}) \text{ and } \downc{x \klor y} = (\downc{x} \land \downc{y}) \tag{Axiom K10} 
\end{flalign}
If a Kleenean is defined, we assume that we can do branching on its value.
 \[
 \all{x : \dK} \defined{x} \to \coprodc{\upc{x}}{\downc{x}}\ . \tag{Axiom K11}\label{a:K11}
 \]

In many cases, we do not work directly with Kleeneans. 
Instead, we call a proposition $P : \Prop$ semi-decidable if
there is a Kleenean $t$ that identifies $P$:
\[
\semidec(P) :\equiv \some{t : \kleene} P = \upc{t}
\]
For example, in Section~\ref{s:axiomatization} we introduce the comparison operator $<\colon \dR \to \dR \to \Prop$ over the reals and assume 
$\all{x, y : \dR}\some{t : \kleene} \upc{t} = (x < y)$
to say that the operator is semi-decidable.
\subsection{Basic axiomatization of the nondeterminism monad}
Nondeterminism is expressed by a monad in our type theory such that when we have a type $X : \Type$, we automatically have its nondeterministic version $\mval X : \Type$.

We assume that there is a type constructor $\mval : \Type \to \Type$, a function lift $\lift^{\mval}$, a unit  $\unit^{\mval}$, and a multiplication $\mult^{\mval}$.
\begin{flalign}
  & \label{a:M1} \mval : \Type \to \Type  \tag{Axiom M1} \\
  & \label{a:M2} \unit^{\mval} : \all{X : \Type} X \to \mval\  X  \tag{Axiom M2} \\
  & \label{a:M3} \mult^{\mval} : \all{X : \Type} \mval\ (\mval\ X) \to \mval\ X  \tag{Axiom M3} \\
  & \label{a:M4} \lift^{\mval} : \all {X, Y : \Type} (X \to Y) \to (\mval\ X \to \mval\ Y)  \tag{Axiom M4}  
\end{flalign}
To simplify the presentation, let us write the type arguments to 
the unit, lift, and multiplication in subscript. For example, we write $\unit^\mval_X$ instead of $\unit^\mval X$. 

We add the following axioms to ensure that the type constructor, function lift, unit, and 
multiplication indeed form a monad. Namely, $\unit^{\mval}$ and $\mult^{\mval}$ are natural transformations and that the monad coherence conditions hold, i.e., 
\begin{flalign}
&  \label{a:M5} \lift^\mval_{X, Y} f\;(\unit^\mval_X\ x) = \unit^\mval_Y\ (f\ x)  \tag{Axiom M5} \\
  & \label{a:M6} \mult^\mval_Y\ (\lift^\mval_{\mval X,\mval Y}\  (\lift^\mval_{X, Y} f)\ y) = (\lift^\mval_{X, Y} f)\  (\mult^\mval_X y)  \tag{Axiom M6} \\
  & \label{a:M7} \mult^\mval_X\ (\unit^\mval_{\mval X}\ y) = y  \tag{Axiom M7} \\
  & \label{a:M8}  \mult^\mval_X\ (\lift^\mval_{X , \mval X}\ \unit^\mval_X \ y) = y  \tag{Axiom M8} \\
  & \label{a:M9}  \mult^\mval_X\ (\mult^\mval_{\mval X}\ z) = \mult^\mval_X\ (\lift^\mval_{\mval (\mval X), \mval X} \ \mult^\mval_X \ z)  \tag{Axiom M9} 
\end{flalign}
for all $X, Y : \Type$, $f: X \to Y$, $x : X$, $y :  \mval\ (\mval\ X)$ and $z :  \mval\ (\mval\ (\mval\ X))$.

A term of the nondeterministic type $\mval X$ is regarded as the result of a nondeterministic computation in $X$.
Under a belief that our type theory has to be effective enough to do classical reasoning 
on the possible return values of the nondeterministic computation,
we suggest a characterization of the monad by relating it with the classical 
non-empty power-set monad that we can construct within the type theory:
\[
\cval X :\equiv \some{S : X \to \Prop} \csome{x : X}S\,x
\]
We can confirm that the coherence conditions for the monad hold with function lift, unit, and multiplication:
\begin{align*}
&\lift^{\cval}_{X, Y}\,f
&:\equiv& \quad\lam{(S, -)} \big(\lam{y:Y}\csome{z:X}(y = f\,z) \times (S\,z), -\big)
\\
&\unit^{\cval}_X\,x &:\equiv&\quad (\lam{y : X}x = y, -) \\
&\mult^{\cval}_X\,x &:\equiv&\quad \big(\lam{y : X} \some{z : \cval X} (\pi_1\,z\,y) \times (\pi_1\,x\,z), -\big)
\end{align*}
Here, $\pi_1$ is the first projection of $\somex$-types and the occurrences of $-$ represent some classical proof terms.

In order to relate it with the classical non-empty power-set monad, 
we assume that there is a submonoidal natural transformation 
\[
\pic : \all{X : \Type} \mval X \to \cval X.
\label{a:M10} \tag{Axiom M10}
\]
It is a submonoidal natural transformation in that (i) it is a natural transformation
\[
\all{X, Y : \Type}
\all{f : X \to Y}
(\lift^\cval_{X, Y}f) \circ \pic_X
= \pic_Y\circ(\lift^\mval_{X, Y}f )
, 
\label{a:M11} \tag{Axiom M11}
\]
(ii) for any $X : \Type$, $\pic_X$ is monic
\[
\all{x :X}
\all{y :X}
\pic_X x = \pic_X y \to
x = y,
\label{a:M12} \tag{Axiom M12}
\]
and (iii) the coherence conditions
which on the unit is 
\[\pic_X\circ\unit^{\mval}_X = \unit^{\cval}_X
\label{a:M13} \tag{Axiom M13}
\] 
and on the multiplication is 
\[
\pic_X \circ \mult^{\mval}_X = \mult^{\cval}_X\circ \pic_{\cval X} \circ (\lift^{\mval} \pic_X)
\label{a:M14} \tag{Axiom M14}
\]
hold.
In other words, thinking of it in the setting of category theory, we assume that our monad $\mval$ is a submonoidal object of $\cval$ in the category of endofunctors. 

We further characterize the nondeterminism by that the classically lifted picture $\lift^\cval\,\pic_X : \cval (\mval X) \to \cval (\cval X)$ is a natural isomorphism that admits an inverse:
\[
\all{X : \Type}\textsf{is\_equiv}\ (\lift^\cval_{\mval X, \cval X} \pic_X)
\label{a:M15} \tag{Axiom M15}
\]
The following diagram shows the relation between the two monads.
\begin{center}
\begin{tikzcd}
  X \arrow[rr, "\unit_X^\mval"] \arrow[rrdd, "\unit_X^\cval"'] && \mval X \arrow[dd, "\pic_X"] \arrow[rr,"\unit_{\mval X}^\cval"]&& \cval (\mval X) \arrow[dd, "\lift^{\cval}\,\pic_X", "\sim" labl]\\ &&&&\\
                                     && \cval X                       \arrow[rr,"\unit_{\cval X}^{\cval}"']&& \cval (\cval X)
 \end{tikzcd}
\end{center}

We call the natural transformation ``picture'' because we regard $\pic_X\, x$, when $x : \mval X$ is a nondeterministic element,
as a classical picture showing the elements $x$ represents
such that we can do reasoning on them. 
However, if the purpose is to refer to the elements of $x : \mval X$, 
we often do not need the second part of $\pic_X\, x$,
which is about the non-emptiness of $x$. 
Hence, let us make the definition 
\[
\picc_X : \mval X \to (X \to \Prop) :\equiv 
\lam{x : \mval X}\pi_1 (\pic_X x)
\]
so that we can conveniently express 
$\picc_X \ x \ y : \Prop$ to say that $y : X$ is a possible outcome of $x : \mval X$.

The last building block in relating the two monads is a destruction method:
\[
\all{X : \Type} \mval X\to  \mval \some{y : X} \picc_X\,x\,y.
\label{a:M16} \tag{Axiom M16}
\]
Namely, when we have a nondeterministic object $x$, we can nondeterministically get a pair $(y, t)$ where $y : X$ and $t$ is a reason why $y$ can be nondeterministically obtained from $x$.

We introduce some additional axioms to characterize nondeterministic computation.
For any two semi-decidable decisions $x, y : \dK$, if promised that either of $x$ or $y$ holds classically, we can nondeterministically decide whether $x$ holds or $y$ holds:
\[
\select :
\all{x, y : \kleene} (\upc{x} \lor \upc{y}) \to \mval \big(\coprodp{\upc{x}}{\upc{y}}\big)\,. \label{a:M17} \tag{Axiom M17}
\]
If a type $X$ is subsingleton, we can eliminate the nondeterminism on $\mval X$:
\[
\singletonextract :\all{X : \Type} (\all{x, y : X} x = y) \to 
\mathsf{is\_equiv}_{X, \mval X}\ \unit^\mval_X. \label{a:M18} \tag{Axiom M18}
\]
Here, $\mathsf{is\_equiv}_{X, Y}f$,
a type saying that two types $X , Y $ are
equivalent by $f : X \to Y$, is defined by
$\some{g : Y \to X}(\all{x : X}g\,(f\,x) = x) \land
\all{y : Y}f\,(g\,y) = y$
as our type theory is extensional.

\begin{example}
For any proposition $P$, suppose both $\semidec(P)$ and $\semidec(\neg P)$ hold.
As $P \lor \neg P$ holds by the classical law of excluded middle, 
we have $\mval(\coprodp{P}{\neg P})$ by applying $\select$. 
As it is provable that $\coprodp{P}{\neg P}$ is subsingleton,
using $\singletonextract$, we have $\coprodp{P}{\neg P}$, the decidability of the proposition $P$.
\end{example}

\begin{example}
\label{ex:1002}
An example of using $\pic$ is when we analyze the values of nondeterministically lifted functions.
See the commutative diagram:
\begin{center}
\begin{tikzcd}
 & X \arrow[lddd]\arrow[rr, "f"]\arrow[dd] & & Y \arrow[dd]\arrow[lddd] \\
 & & & \\
 & \mval X \arrow[rr,"\lift^\mval_{X, Y}f"] \arrow[ld,"\pic_X"'] & & \mval Y \arrow[ld,"\pic_Y"]\\
 \cval X \arrow[rr,"\lift^\cval_{X, Y}f"'] & & \cval Y & 
 \end{tikzcd}
\end{center}
When we have a lifted function $\lift^\mval_{X, Y} f$
and a nondeterministic object $x : \mval X$,
we see the nondeterministic output
$\lift^\mval_{X, Y} f\ x$ by taking its picture
$\pic_Y\ (\lift^\mval_{X, Y} f\ x)$.
Directly from the assumptions of $\pic$,
propositional extensionality, and 
functional extensionality, we can easily 
show that
\[
\picc_Y\ (\lift^\mval_{X, Y} f \ x)
= \lam{z : Y} \csome{y : X} \picc_X x \ y \land z = f \ y
\]
holds.
In words, $z : Y$ is a possible output of 
$(\lift^\mval_{X, Y} f\ x)$
if and only if
there classically is $y : X$ such that
$y$ is a possible output of $x : \mval X$
and $z$ is $f\ y$.

The equation can be used in the following implicational forms:
\[
\picc_X x \ y \to \picc_Y (\lift^\mval_{X, Y} f \ x)\ (f\ y)
\]
\[
\picc_Y (\lift^\mval_{X, Y} f \ x)\ z \to
\csome{y : X} \picc_X x \ y \land z = f \ y
\]

\end{example}


\subsection{Nondeterministic dependent choice}\label{s:nondet-dep-choice}
Suppose any sequence of types $P : \dN \to \Type$ and a nondeterministic procedure that runs through the 
types $f : \all{n : \dN} (P\,n) \to \mval(P\,(n+1))$. 
We can think of a procedure of repeatedly and indefinitely applying the nondeterministic procedure: e.g.,
$f_n(\cdots f_2(f_1(f_0\,x_0))\cdots)$ where $x_0 : P\,0$. Though the expression is not well-typed,
intuitively, from the computational point of view, when we apply it repeatedly, we nondeterministically get a sequence that selects through $P\,n$. 
Starting from $x_0$, we get nondeterministically $x_1 : P\,0$ from $f\, 0\,x_0 : \mval\, (P\,0)$. Then, according to the nondeterministic choice $x_1 : P\,0$ amongst $f\, 0\,x_0 : \mval\, (P\,0)$, we again get nondeterministically $x_2 : P\,1$ from $f\, 1\, x_1 : \mval\, (P\,1)$. Repeating this forever, 
we get a specific (nondeterministic) sequence where each entry depends on the nondeterministic choices that have been made in the previous entries.

One may think the primitive recursion of natural numbers, which already exists in the base type theory,
and which is meant to express repeated applications, will do the job.
The primitive recursion of natural numbers applied to the type family $P$ is of type
\[
\mval\, (P\,0) \to (\all{n:\dN}\mval \,(P\,n) \to \mval \,(P\,(n + 1))) \to 
\all{n : \dN}\mval\,(P\,n).
\]
Given $f : \all{n : \dN} (P\,n) \to \mval(P\,(n+1))$ and $x_0 : P\,0$, applying the
recursion on $\unit^\mval_{P\,0} x_0$ and $\lam{n : \dN}\lam{x : \mval (P\,n)}\mult^\mval(\lift^\mval (f\, n)\,x)$ denotes exactly
applying $f$ repeatedly on $x_0$.
However, the result of the application does not preserve any information on 
the dependency between the sequential nondeterministic choices as we can see that the result is of type $\all{n : \dN}\mval (P\,n)$.

For example, let us consider $P\,n :\equiv \dN$  and  
\[f\,n\,x :\equiv \begin{cases}0 \text{ or } 1 &\text{if }n = 0, \\ x &\text{otherwise.}\end{cases} \]
When we repeatedly apply the procedure on $0$, we expect to have one of the two sequences
$0,0,0,0,0,\cdots$ or $0,1,1,1,1,\cdots$ nondeterministically. However, when we apply 
the primitive recursion, all we can get is the sequence of the nondeterministic real numbers
$0, (0\text{ or } 1), (0\text{ or } 1),\cdots$ which is less informative, forgetting all the information about the dependencies
that $f$ creates.
Hence, we need a separate and more expressive principle but with computational behavior identical to primitive recursion.

Suppose any sequence of types $P : \dN \to \Type$ and a sequence of classical binary relations
$R :\all{n : \dN} P\,n \to P\,(n+1) \to \Prop$. The binary relation is where the 
dependencies between sequential choices are encoded.
For the above example, $R\,n\,x\,y$ can be set to $n > 0 \to x = y$.
We call a function of type
\[
\all{n : \dN}\all{x : P\,n}\mval\some{y : P\,(n+1)}R\,n\,x\,y
\]
an $\mval$-\emph{trace} of $R$. Note that admitting a trace automatically ensures that $R$ is a (classically) entire relation.

The nondeterministic dependent choice ($\mval$-dependent choice for short) 
says that for any $\mval$-trace of $R$, there is a term of type
\[
\mval\some{g : \all{n : \dN}P\,n}\all{m : \dN}R\,m\,(g\,m)\,(g\,(m + 1))
\]
satisfying a coherence condition that will be described below. 
In words: From a trace of $R$, 
we can nondeterministically get a sequence $g$ that runs through $R$.

Given any $\mval$-trace $f$ of $R$, 
now there are two different ways of constructing a term of type
$\all{n : \dN}\mval(P\,n)$, forgetting
the information on the dependencies.
The first is to naively apply the primitive recursion on $f$ as described in the beginning of this subsection.
The second, is to apply the following operation
\[
\mathsf{to\_fiber}^\mval(g : \mval\all{n : \dN}P\,n) :\equiv \lam{n : \dN} \big(\lift^\mval (\lam{h : \all{n: \dN}P\, n} h\, n)\, g\big)
\]
on the $\mval$-lifted first projection 
of the $\mval$-dependent choice.
The coherence condition states that the two operations of forgetting the information on the paths are identical (c.f.\ Figure~\ref{fig:dchoice}).

\begin{figure}
    \centering
    \includegraphics[width=.8\hsize]{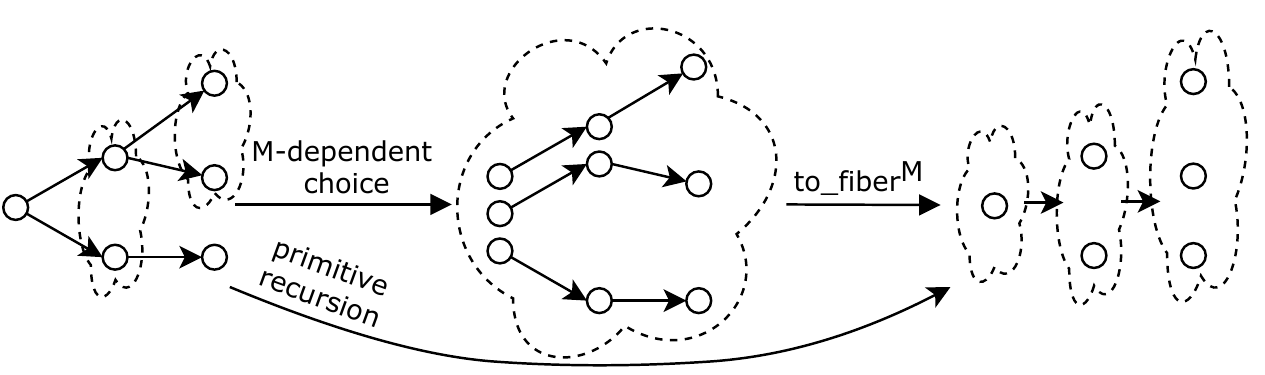}
    \caption{Intuitive picture on the coherence condition for the $\mval$-dependent choice}
    \label{fig:dchoice}
\end{figure}

\[
\textit{our type theory admits $\mval$-dependent choice.}
\label{a:M19}
\tag{Axiom M19}
\]
\begin{remark}
The name nondeterministic dependent choice comes from the observation
that when repeating the above with the double negation
monad or the propositional truncation monad (assuming they are provided by the type theory),  the principle becomes the classical dependent choice and intuitionistic dependent choice, respectively.
\end{remark}

In an earlier version \cite{wollic}, we axiomatized that there is a term constant $\countablelift$
such that for any $P : \dN \to \Type$, it holds that
$
\countablelift\,P: (\all{n :\dN}\mval\,(P\,n)) \to \mval\all{n : \dN} P\,n $ is a section of $\mathsf{to\_fiber}^\mval$.
In other words, for any 
$f : \all{n :\dN}\mval\,(P\,n)$, it holds that
$\mathsf{to\_fiber}\,(\countablelift\,f) = f$.
From a computational point of view, it says that when we have a sequence of nondeterministic computations, we can nondeterministically choose one sequence of (deterministic) computations. 
The countable lift is derivable from nondetermistic dependent choice.

\begin{lemma}
The countable lift property $\countablelift$ is derivable.
\end{lemma}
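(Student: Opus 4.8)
The plan is to derive $\countablelift$ for an arbitrary type family $P : \dN \to \Type$ by instantiating $\mval$-dependent choice (\ref{a:M19}) with a carefully chosen family and trivial relation. First I would set, for a fixed $f : \all{n :\dN}\mval\,(P\,n)$, the new family $P' := P$ and the relation $R\,n\,x\,y :\equiv \ttrue$ (the always-true classical proposition). Then an $\mval$-trace of $R$ is a function of type $\all{n : \dN}\all{x : P\,n}\mval\some{y : P\,(n+1)}\ttrue$, which I obtain from $f$ by ignoring the input $x : P\,n$, taking $f\,(n+1) : \mval\,(P\,(n+1))$, and lifting along the map $y \mapsto (y, \ast)$ that pairs a value with the canonical proof of $\ttrue$; concretely $\lam{n}{\lam{x}{\lift^\mval\,(\lam{y}{(y,\ast)})\,(f\,(n+1))}}$. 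Applying $\mval$-dependent choice to this trace, together with any starting point $x_0 : P\,0$ extracted from $f\,0$ — which we get nondeterministically since $\mval$ is a monad: $\mult^\mval\!\big(\lift^\mval(\cdots)\,(f\,0)\big)$ — yields a term of type $\mval\some{g : \all{n : \dN}P\,n}\all{m:\dN}R\,m\,(g\,m)\,(g\,(m+1))$, and since $R$ is trivially true the $\somex$-component carries no information, so $\mval$-lifting the first projection gives a candidate term of type $\mval\all{n : \dN}P\,n$. I would define $\countablelift\,f$ to be this term, modulo threading the dependence on $x_0$ properly through the monad multiplication.

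The substantive step is verifying the section equation $\mathsf{to\_fiber}^\mval\,(\countablelift\,f) = f$. Here I would invoke the coherence condition bundled into $\mval$-dependent choice (Figure~\ref{fig:dchoice}): it states exactly that the two ways of forgetting path information — naive primitive recursion on the trace, and $\mathsf{to\_fiber}^\mval$ applied to the lifted first projection of the choice term — agree. So $\mathsf{to\_fiber}^\mval\,(\countablelift\,f)$ equals the primitive-recursion iterate of the trace built from $f$. It then remains to check that this primitive-recursion iterate is pointwise equal to $f$ itself. For $n=0$ this is the monad law relating $\mult^\mval$, $\lift^\mval$ and the choice of $x_0$ from $f\,0$ (essentially \ref{a:M7} and \ref{a:M8}); for the successor step, the trace at stage $n$ was defined to discard its argument and return (the paired-up) $f\,(n+1)$, so one application of $\mult^\mval \circ \lift^\mval(f\,n)$ in the recursion collapses to $f\,(n+1)$ again by the monad laws. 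Assembling these by induction on $n$, and using functional extensionality (\ref{a: TT4}) to pass from the pointwise equalities to an equality of functions $\all{n:\dN}\mval(P\,n)$, closes the argument.

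I expect the main obstacle to be bookkeeping rather than conceptual: the trace must be fed a starting element of $P\,0$, but $f$ only supplies $\mval(P\,0)$, so the whole construction of $\countablelift\,f$ lives under a $\mult^\mval$ binding a nondeterministic $x_0$, and one must be careful that the coherence condition — stated for a trace together with a definite start — still applies after this extra monadic wrapping, i.e.\ that $\mult^\mval$ commutes appropriately with $\mathsf{to\_fiber}^\mval$ and primitive recursion. Managing this cleanly may require first proving a small lemma that primitive recursion and $\mathsf{to\_fiber}^\mval$ are natural with respect to $\mult^\mval$, after which the coherence condition of \ref{a:M19} can be applied fiberwise. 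The rest is the routine monad-law calculation sketched above, which I would not spell out in full.
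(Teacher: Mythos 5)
Your proposal follows essentially the same route as the paper: instantiate $\mval$-dependent choice with the trivial relation $R\,n\,x\,y :\equiv \ttrue$ and the trace that ignores its argument and returns (the paired-up) $f\,(n+1)$, then project to get $\mval\all{n:\dN}P\,n$; the paper's own proof is exactly this construction and leaves the section equation to the coherence condition, so your extra verification is welcome detail rather than a divergence. One small correction to that extra part: the collapse $\mult^\mval\big(\lift^\mval(\lam{x}c)\,m\big) = c$ used in your successor step is not a consequence of the bare monad laws \ref{a:M5}--\ref{a:M9} alone (it fails for general monads), but follows here from the characterization of $\mval$ via $\pic$ (\ref{a:M10}--\ref{a:M16}), where the corresponding identity holds in the nonempty power-set monad and transfers back since $\pic$ is monic.
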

\begin{proof}
Suppose any sequence of types $P : \dN \to \Type$ and a sequence of nondeterministic objects
$f : \all{n : \dN}\mval(P\, n)$. 
Then, we can set $R\,n\,x\,y :\equiv \ttrue$ and verify that the function
\[
\lam{n : \dN}\lam{x : P\,n} (f\,(n+1), \star)
\]
is a trace where $\star : \ttrue$. 
Hence, applying the $\mval$-lifted dependent choice, we can get a term of type $\mval\all{n : \dN}P\, n$.
\end{proof}

\subsection{Nondeterministic logic}
When we have a nondeterministic object $x : \mval X$, regarding it as the result of some nondeterministic computation, 
it is desirable to analyze properties of the possible outcomes of the nondeterministic computation $x$.
For example, when we have a nondeterministic real number $x : \mval \dR$, we might want to know 
if every possible outcomes of $x$ is non-zero or if there is at least one possible outcome which is zero.
In general, for a classical predicate $P : X \to \Prop$, for a nondeterministic object $x : \mval X$, 
it is desirable to have a way to express a proposition such as \emph{all possible outcomes $y : X$ of $x$ satisfies $P\,y$}
or \emph{some possible outcomes $y : X$ of $x$ satisfies $P\,y$}.

The most simple way to achieve this is to use the natural transformation $\pic_X : \mval X \to \cval X$
so that we can work with the classical description of $x : \mval X$.
For example, when we have $P : X \to \Prop$, the proposition 
that all outcomes of $x$ satisfying $P$ can be described by 
type $\all{y : X}\picc_X\,x\,y \to P\,y$
and the proposition that some outcomes of $x$ satisfying $P$ can be 
described by type $\csome{y : X}\picc_X\,x\,y \times P\,y$.
However, the characterization of the nondeterminism allows another formalization.
When we have $P : X \to \Prop$, we can naturally lift it to 
$\lift^\mval P : \mval X \to \mval \Prop$. Hence, if 
there are retractions of $\unit^\mval_\Prop : \Prop \to \mval\Prop$,
we can postcompose them to get naturally defined predicates of type
$\mval X \to \Prop$.

\begin{lemma}
Define the two operators:
\begin{align*}
\land^\mval &: \mval \Prop \to \Prop :\equiv \lam{x : \mval \Prop} x = \unit^\mval \ctrue\\
\lor^\mval &: \mval \Prop \to \Prop :\equiv \lam{x : \mval \Prop} \neg (x = \unit^\mval \cfalse)
\end{align*}
They are retractions of $\unit^\mval_\Prop : \Prop \to \mval \Prop$.
In other words, for any $P : \Prop$, it holds that
$\land^\mval(\unit^\mval_\Prop\,P) = P$ 
and $\lor^\mval(\unit^\mval_\Prop\,P) = P$ hold.
\end{lemma}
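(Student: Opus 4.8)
The plan is to reduce each of the two claimed equalities of propositions to a logical biconditional by propositional extensionality (Axiom~TT2), and then to obtain that biconditional from injectivity of $\unit^\mval_\Prop : \Prop \to \mval\Prop$. Unfolding the definitions, $\land^\mval(\unit^\mval_\Prop\,P)$ is the proposition $\unit^\mval_\Prop\,P = \unit^\mval_\Prop\,\ctrue$ and $\lor^\mval(\unit^\mval_\Prop\,P)$ is $\neg(\unit^\mval_\Prop\,P = \unit^\mval_\Prop\,\cfalse)$, so it is enough to prove $(\unit^\mval_\Prop\,P = \unit^\mval_\Prop\,\ctrue) \liff P$ and $\neg(\unit^\mval_\Prop\,P = \unit^\mval_\Prop\,\cfalse) \liff P$. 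Along the way I use the two elementary facts, both immediate from propositional extensionality, that in $\Prop$ one has $P = \ctrue \liff P$ and $P = \cfalse \liff \neg P$. The only genuinely new ingredient is injectivity of $\unit^\mval_\Prop$; since $\Prop$ is not subsingleton, $\singletonextract$ (Axiom~M18) does not apply directly, so I route through the classical nonempty powerset monad $\cval$.

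First I would show that $\unit^\cval_X : X \to \cval X$ is injective for every $X : \Type$. Given $x, x' : X$ with $\unit^\cval_X\,x = \unit^\cval_X\,x'$, applying $\pi_1$ yields $(\lam{y : X} x = y) = (\lam{y : X} x' = y)$ as functions $X \to \Prop$; evaluating both sides at $y :\equiv x$ and using functional extensionality (Axiom~TT4) together with propositional extensionality gives $(x = x) = (x' = x)$, hence $x' = x$ because $x = x$ holds. To transfer this to $\mval$: if $\unit^\mval_\Prop\,P = \unit^\mval_\Prop\,Q$, applying $\pic_\Prop$ to both sides and using the unit coherence $\pic_X \circ \unit^\mval_X = \unit^\cval_X$ (Axiom~M13) gives $\unit^\cval_\Prop\,P = \unit^\cval_\Prop\,Q$, whence $P = Q$ by the previous step. (One could phrase this equivalently via $\picc_\Prop$; the argument is identical.)

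With injectivity of $\unit^\mval_\Prop$ in hand, the rest is bookkeeping. For $\land^\mval$: if $P$ holds then $P = \ctrue$, hence $\unit^\mval_\Prop\,P = \unit^\mval_\Prop\,\ctrue$ by congruence; conversely $\unit^\mval_\Prop\,P = \unit^\mval_\Prop\,\ctrue$ forces $P = \ctrue$ by injectivity, hence $P$. Thus $\land^\mval(\unit^\mval_\Prop\,P) \liff P$, and propositional extensionality upgrades this to $\land^\mval(\unit^\mval_\Prop\,P) = P$. For $\lor^\mval$: by injectivity, $\unit^\mval_\Prop\,P = \unit^\mval_\Prop\,\cfalse$ is equivalent to $P = \cfalse$, i.e.\ to $\neg P$; negating, $\lor^\mval(\unit^\mval_\Prop\,P)$ is equivalent to $\neg\neg P$, which is equivalent to $P$ by the law of excluded middle (Axiom~TT1). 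Propositional extensionality then gives $\lor^\mval(\unit^\mval_\Prop\,P) = P$.

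The main obstacle, modest as it is, is exactly this injectivity of $\unit^\mval_\Prop$: because $\Prop$ is not subsingleton it cannot be read off from Axiom~M18, so one must exploit the embedding $\pic$ into $\cval$ and the concrete description of $\unit^\cval$. The only remaining care points are that the manipulations of equalities of $\somex$-pairs and of functions into $\Prop$ must be justified by the extensionality axioms, which they are.
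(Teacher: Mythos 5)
Your proposal is correct and follows essentially the same route as the paper: the key ingredient in both is monicity of $\unit^\mval_\Prop$, obtained from $\pic_\Prop \circ \unit^\mval_\Prop = \unit^\cval_\Prop$ and injectivity of $\unit^\cval$, combined with classical propositional reasoning (propositional extensionality and excluded middle). The only differences are organizational — the paper case-splits on $P \lor \neg P$ and checks the instances $\ttrue$, $\tfalse$, whereas you prove biconditionals and invoke excluded middle only for $\neg\neg P \to P$, additionally spelling out the injectivity of $\unit^\cval$ that the paper treats as obvious.
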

\begin{proof}
To prove that $\land^\mval$ is a retract, 
we need to show that $\land^\mval (\unit^\mval_\Prop P) = P$ 
holds for any $P$.
As the target, which is an identity type, is in $\Prop$, we 
can do branching on $P \lor \neg P$ using \ref{a:TT1}. Further using the classical propositional extensionality (\ref{a: TT2}), 
we only need to show that $\land^\mval (\unit^\mval_\Prop \ttrue) = \ttrue$
and $\land^\mval (\unit^\mval_\Prop \tfalse) = \tfalse$ hold.

Unfolding the definition of $\land^\mval$, it means to prove 
\[
(\unit^\mval_\Prop \ttrue = \unit^\mval_\Prop \ttrue) = \ttrue 
\]
and
\[
(\unit^\mval_\Prop \tfalse = \unit^\mval_\Prop \ttrue) = \tfalse 
\]
which are obvious from the classical propositional extensionality
and the fact that $\unit^\mval$ is monic.
The fact that $\unit^\mval$ is monic is direct from the fact that 
$\pic$ and $\unit^\cval$ are monic.

The other case of $\lor^\mval$ can be proven similarly.

\end{proof}

Let us for a moment depart the type theory and think in the classical setting where we consider $\Prop$ as a set $\{\top, \bot\}$. 
And, 
$\mval\Prop$ is regarded as a set $\{\{\top, \bot\}, \{\top\}, \{\bot\}\}$.
Hence, classically, there can be exactly two retractions of the unit which is decided by whether
we map $\{\top, \bot\}$ to $\top$ or to $\bot$. 
It is not too difficult to see $\land^\mval$ is a retraction that is mapping $\{\top, \bot\}$ to $\bot$
and $\lor^\mval$ is a retraction that is mapping $\{\top, \bot\}$ to $\top$. 

\begin{definition}
For any type $X$, a classical predicate $P : X \to \Prop$, and a nondeterministic object $x : \mval X$, let us define
\[
\forall^\mval x\, P  :\equiv \land^\mval \circ \lift^\mval P 
\quad\text{and}\quad
\exists^\mval x\,P  :\equiv \lor^\mval \circ \lift^\mval P .
\]
Let us write $\forall^\mval y : x.\; P(y)$ for $\forall^\mval x\,(\lam{y :X} P(y))$ 
and $\exists^\mval y : x.\; P(y)$ for $\exists^\mval x\,(\lam{y :X} P(y))$.
And, for any $x : \mval X$ and $y : X$, define
\[
y \in^\mval x :\equiv \exists^\mval z : x.\;y = z.
\]
\end{definition}

Then, we can prove in the type theory that the axiomatization of $\pic$ works as intended by proving the following lemma.
\begin{lemma}
Within the type theory, for any type $X$, a classical predicate $P : X \to \Prop$, and a nondeterministic object $x : \mval X$, 
the following equations hold:
\[
\forall^\mval x\, P = \all{y : X}\picc_X \,x\,y \to P\,y
\;\;\text{and}\;\;
\exists^\mval x\, P = \csome{y : X}\picc_X \,x\,y \land P\,x.
\]
And, for any $y : X$, it holds that
\[
y \in^\mval x = \picc_X \,x\,y.
\]
\end{lemma}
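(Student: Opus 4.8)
The plan is to reduce everything to the two implicational forms already extracted in Example~\ref{ex:1002} together with the definitions of $\forall^\mval$, $\exists^\mval$, and the retraction lemmas. Since the three claimed equations are all identity types living in $\Prop$, I may freely use \ref{a:TT1} and \ref{a: TT2}: it suffices to prove the two sides are logically equivalent as propositions, and I can case-split on the classical truth value of the relevant pieces whenever convenient.

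First I would unfold $\forall^\mval x\,P :\equiv \land^\mval(\lift^\mval_{X,\Prop} P\ x)$ and apply the description of the picture of a lifted function from Example~\ref{ex:1002}, instantiated at $Y :\equiv \Prop$ and $f :\equiv P$: this gives
\[
\picc_\Prop\,(\lift^\mval_{X,\Prop} P\ x) = \lam{q : \Prop}\csome{y : X}\picc_X\,x\,y \land q = P\,y.
\]
Next, by \ref{a:M13} (the unit coherence) $\pic_\Prop(\unit^\mval_\Prop\,\ttrue) = \unit^\cval_\Prop\,\ttrue$, whose underlying predicate is $\lam{q:\Prop}\,\ttrue = q$; so $\land^\mval z = (z = \unit^\mval_\Prop\ctrue)$ holds iff the picture predicates agree, which by functional and propositional extensionality (\ref{a: TT4}, \ref{a: TT2}) and monicity of $\pic_\Prop$ (\ref{a:M12}) reduces to: for every $q : \Prop$, $(\csome{y}\picc_X\,x\,y \land q = P\,y) \liff (q = \ttrue)$. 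Specializing this biconditional to $q :\equiv P\,y_0$ for a possible outcome $y_0$ forces $P\,y_0 = \ttrue$, giving the forward direction $\forall^\mval x\,P \to (\all{y}\picc_X\,x\,y \to P\,y)$; conversely, if all outcomes satisfy $P$, then using non-emptiness of $\picc_X\,x$ (the second component of $\pic_X\,x$, i.e. $\csome{y}\picc_X\,x\,y$) one checks the biconditional holds for all $q$, giving the reverse direction. The $\exists^\mval$ case is dual: unfold $\lor^\mval z = \neg(z = \unit^\mval_\Prop\cfalse)$, note $\pic_\Prop(\unit^\mval_\Prop\cfalse)$ has underlying predicate $\lam{q}\,\tfalse = q$, and the negated-equality unfolds (again via \ref{a:M12}, \ref{a: TT4}, \ref{a: TT2}, and \ref{a:TT1} to push the double negation through) to: it is \emph{not} the case that for all $q$, $(\csome{y}\picc_X\,x\,y \land q = P\,y) \liff (q = \tfalse)$; one shows this is equivalent to $\csome{y : X}\picc_X\,x\,y \land P\,y$. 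Finally $y \in^\mval x = \picc_X\,x\,y$ is the instance of the $\exists^\mval$ equation with $P :\equiv \lam{z}y = z$, where $\csome{z}\picc_X\,x\,z \land y = z$ collapses to $\picc_X\,x\,y$ by substituting the equality.

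The main obstacle I expect is the careful bookkeeping in the $\exists^\mval$ direction: unwinding $\neg(z = \unit^\mval_\Prop\cfalse)$ into a positive existential statement genuinely uses classicality (\ref{a:TT1}) to convert $\neg\forall$ into $\csomex$, and one must be attentive that the ambient $\csomex$ is the $\Prop$-level existential with the property $\neg\neg\some{x}Y \simeq \csome{x}Y$, so the double negations introduced by $\lor^\mval$ and by the classical choice of outcome cancel correctly. The $\land^\mval$ side and the membership corollary are comparatively routine rewriting once the picture-of-a-lift equation from Example~\ref{ex:1002} and the unit coherence \ref{a:M13} are in hand.
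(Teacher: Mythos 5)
Your proof is correct and its skeleton matches the paper's: both reduce the propositional identities to logical equivalences via \ref{a: TT2} and \ref{a: TT4}, rewrite $\picc_\Prop(\lift^\mval_{X,\Prop}P\ x)$ using Example~\ref{ex:1002}, use \ref{a:M13} to compute the picture of $\unit^\mval_\Prop\,\ttrue$ (resp.\ $\tfalse$), and use monicity to pass from an identity in $\mval\Prop$ to an identity of the underlying predicates (the paper invokes monicity of $\picc$, you invoke \ref{a:M12} plus extensionality; both silently use that elements of $\cval\Prop$ are determined by their first component, which is routine). The one genuine divergence is how the non-emptiness of the outcome set is obtained in the backward directions: the paper, in its subgoal Equation~\ref{eqn:1004}, first reduces the subsingleton $\Prop$-goal to an $\mval$-lifted one by \ref{a:M18} and then destructs $x$ via \ref{a:M16}, whereas you simply project the second component of $\pic_X\,x$, which is literally $\csome{y : X}\picc_X\,x\,y$, and instantiate it because the goal lies in $\Prop$. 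Your route is shorter and avoids \ref{a:M16} and \ref{a:M18} altogether, at no loss of rigor; just note that the same non-emptiness is also needed (and available by the same projection) in your $\exists^\mval$ equivalence, where you leave it implicit, and that the $\ctrue/\cfalse$ appearing in the definition of $\land^\mval,\lor^\mval$ should be read as $\ttrue/\tfalse$ (a typo you inherited from the paper, which also writes $P\,x$ for $P\,y$ in the statement). Organizing the argument as a single pointwise biconditional over $q : \Prop$ rather than the paper's two separate implications is only a presentational difference.
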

\begin{proof}
For the first equation, 
using the classical propositional extensionality, 
we only need to prove that 
$
(\forall^\mval x\, P) \to (\all{y : X}\picc_X \,x\,y \to P\,y)$
and
$
(\all{y : X}\picc_X \,x\,y \to P\,y) \to (\forall^\mval x\, P)$ hold.
Unfolding the definitions, the implications become
\begin{equation}\label{eqn:1000}
(\lift^\mval_{X, \Prop} P\ x = \unit^\mval_\Prop \ttrue) \to (\all{y : X}\picc_X \,x\,y \to P\,y)
\end{equation}
and
\begin{equation}\label{eqn:1001}
 (\all{y : X}\picc_X \,x\,y \to P\,y) \to
 (\lift^\mval_{X, \Prop} P\ x = \unit^\mval_\Prop \ttrue).
\end{equation}

For Equation~\ref{eqn:1000}, 
we need to prove that $P\ y$ holds 
with the assumptions 
$\lift^\mval_{X, \Prop} P\ x = \unit^\mval_\Prop \ttrue$, $y : X$, and $\picc_X x \ y$. 
From Example~\ref{ex:1002}, 
since we have $\picc_X x \ y$ as an assumption, 
we can derive
$\picc_\Prop (\lift_{X, \Prop}^\mval P\ x)\ (P\ y)$.
Thanks to the identity type
$\lift^\mval_{X, \Prop} P\ x = \unit^\mval_\Prop \ttrue$, we get 
\begin{equation}\label{eqn:1002}
\picc_\Prop (\unit^\mval_\Prop \ttrue)\ (P\ y).
\end{equation}

Therefore, by \ref{a:M13}
on Equation~\ref{eqn:1002},
we get $\unit_\Prop^\cval \ttrue \ (P\ y)$ .
Unfolding the definition of $\unit^\cval$, we can get $P\ y = \ttrue$
which is enough to prove the goal $P\ y$.

\vspace{1em}

For Equation~\ref{eqn:1001}, we need to prove the identity
$\lift^\mval_{X, \Prop} P\ x = \unit^\mval_\Prop \ttrue$
assuming $\all{y : X}\picc_X \,x\,y \to P\,y$.
Applying the fact that $\picc$ is monic, we can simplify the goal to
\[
\picc_\Prop (\lift^\mval_{X, \Prop} P\ x) = \picc_\Prop (\unit^\mval_\Prop \ttrue)
\]
where the right hand side can be reduced to $\lam{y : \Prop}y = \ttrue$
by \ref{a:M13}.
Hence, applying the functional extensionality and classical propositional extensionality, the goal gets divided into the two sub-goals:
\begin{equation}
\label{eqn:1003}
(\picc_\Prop (\lift^\mval_{X, \Prop} P\ x) \ y) \to
y = \ttrue
\end{equation}
and
\begin{equation}
\label{eqn:1004}
\picc_\Prop (\lift^\mval_{X, \Prop} P\ x) \ \ttrue . 
\end{equation}

For Equation~\ref{eqn:1003},
from Example~\ref{ex:1002}, we have
the equation $\picc_\Prop (\lift^\mval_{X, \Prop} P\ x) \ y) =
\csome{z : X} \picc_X x\ z \land y = P\ z $.
Hence, we need to derive $y = \ttrue$
from $\csome{z : X} \picc_X x\ z \land y = P\ z $
and $\all{y : X}\picc_X \,x\,y \to P\,y$.

Since the goal is in $\Prop$, we can instantiate
$z : X$ such that $\picc_X x\ z$ and  $y = P\ z$ hold.
Therefore, from the initial assumption $\all{y : X}\picc_X \,x\,y \to P\,y$
we have $P\ z$.
From classical propositional extensionality, we can prove $P\ z = \ttrue$.
Hence, we can derive $y = \ttrue$.

For Equation~\ref{eqn:1004},
from Example~\ref{ex:1002}, we have
the equation $\picc_\Prop (\lift^\mval_{X, \Prop} P\ x) \ \ttrue) =
\csome{z : X} \picc_X x\ z \land \ttrue = P\ z $.
Hence, the goal can be rewritten as
\[
\csome{z : X} \picc_X x\ z \land \ttrue = P\ z 
\]
Since the goal, which is in $\Prop$ is provably subsingleton, 
we can reduce the goal to 
\[
\mval\csome{z : X} \picc_X x\ z \land \ttrue = P\ z 
\]
using \ref{a:M18}. 

Using \ref{a:M16}, from the assumption that we have $x : \mval X$,
we have $\mval\some{z : X} \picc_X x \ z$.
Since the goal is $\mval$-lifted, we can apply $\lift^\mval$
to get
$\some{z : X} \picc_X x \ z$.
Hence, we can get $z : X$ such that 
$ \picc_X x \ z$ holds. 
Now, applying it in the initial assumption $\all{y : X}\picc_X \,x\,y \to P\,y$,
we get $P \ z$. Hence, applying $\unit^\mval$ with the $z$ 
being the witness of the $\csomex$ type, we get 
\[
\mval\csome{z : X} \picc_X x\ z \land \ttrue = P\ z .
\]

\vspace{1em}

Other equations can be proven similarly.

\end{proof}

In practice, when using our axiomatization, 
we often get an object of type $\mval\some{x : X}P\,x$. 
For example, by constructing a term $f : \coprodp{P}{Q} \to \some{x : X}P\,x$
using the case distinction on $P$ or $Q$, when we lift it, we get 
$\lift^\mval f : \mval(\coprodp{P}{Q}) \to \mval\some{x : X}P\,x$. Then, 
precomposing an assumption that $P$ and $Q$ are nondeterministically choosable, 
we get a term of type $\mval\some{x : X}P\,x$ meaning that 
we can nondeterministically obtain $x : X$ such that $P\, x$ holds.
Since now we have a tool to deal with each entries of a nondeterministic object,
we can come up with the following lemma 
that splits $\mval$-lifted $\somex$-types:
\begin{lemma}
For any type $X : \Type$ and a classical predicate $P : X \to \Prop$, it holds that
\[
\big(\some{x : \mval X} \forall^\mval y : x.\,P\,y\big) \leftrightarrow 
\mval \some{x : X} P\,x
\]
\end{lemma}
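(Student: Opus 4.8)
The plan is to prove the two implications of the equivalence separately; since only a bi-implication (not an equality in $\Prop$) is requested, no appeal to propositional extensionality between the two sides is needed. In both directions the workhorses are the characterisation $\forall^\mval x\,P = \big(\all{y : X}\picc_X\,x\,y \to P\,y\big)$ established in the previous lemma, the destruction method \ref{a:M16}, and the computation of the picture of a lifted function from Example~\ref{ex:1002}. Throughout write $S :\equiv \some{a : X}P\,a$ and let $\pi_1 : S \to X$ be its first projection.

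First I would handle the direction $\mval\,S \to \some{x : \mval X}\forall^\mval y : x.\,P\,y$. Given $w : \mval\,S$, set $x :\equiv \lift^\mval\,\pi_1\,w : \mval X$; this is the witness for the outer $\somex$. It remains to prove $\forall^\mval y : x.\,P\,y$, which by the previous lemma is the proposition $\all{y : X}\picc_X\,x\,y \to P\,y$. So fix $y : X$ with $\picc_X\,(\lift^\mval\,\pi_1\,w)\,y$. By the implicational form of Example~\ref{ex:1002} there classically is some $z : S$ with $\picc_S\,w\,z$ and $y = \pi_1\,z$; since the current goal $P\,y$ lies in $\Prop$ we may eliminate this $\csomex$, and writing $z = (a, p)$ with $p : P\,a$ and $a = \pi_1\,z = y$ we obtain $P\,y$.

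For the other direction, $\big(\some{x : \mval X}\forall^\mval y : x.\,P\,y\big) \to \mval\,S$, suppose we are given $x : \mval X$ together with a proof of $\forall^\mval y : x.\,P\,y$; by the previous lemma this is exactly a function $g : \all{y : X}\picc_X\,x\,y \to P\,y$. Applying the destruction method \ref{a:M16} to $x$ yields a term of type $\mval\some{y : X}\picc_X\,x\,y$. The map
\[
h :\equiv \lam{(y, q)} (y,\, g\,y\,q)\;:\; \big(\some{y : X}\picc_X\,x\,y\big) \to S
\]
can then be lifted via $\lift^\mval$ and applied to that term, producing a term of type $\mval\,S$, as required.

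The only point needing care — the closest thing here to a genuine obstacle — is the $\csomex$-elimination in the first direction: Example~\ref{ex:1002} only delivers a \emph{classical} existential, so one has to notice that the goal $P\,y$ is a proposition in $\Prop$ before extracting the preimage; everything else is routine bookkeeping with $\picc$ and the previous lemma.
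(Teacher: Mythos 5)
Your proof is correct: both directions type-check against the paper's toolkit, using the characterisation $\forall^\mval x\,P = \all{y : X}\picc_X\,x\,y \to P\,y$, the lifted-picture computation of Example~\ref{ex:1002} (with the $\csomex$-elimination justified because the goal $P\,y$ lies in $\Prop$), and the destruction method \ref{a:M16} together with $\lift^\mval$ of the pairing map. The paper states this lemma without giving a proof, and your argument is exactly the kind of routine verification its preceding development is set up to support, so there is nothing to fault and no divergence to report.
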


In words, when we have a nondeterministic existence $\mval\some{x : X} P\,x$, we 
can split it to obtain a deterministic existence of 
a nondeterministic object $x : \mval X$
such that every entry of it satisfies $P$, and vice versa. 

\section{Real numbers and Limits}
\label{s:axiomatization}
Let us now introduce the basic axioms that we use for the real number type.
A defining feature of exact real computation is the ability to compute certain limits of user-defined sequences.
Its counterpart in the axiomatization of real numbers is the principle of metric completeness.

There are three distinct cases where we need to compute limits:
\begin{enumerate}[label=(\arabic*)]
\item when a deterministic sequence of real numbers converge to a deterministic point, 
\item when a sequence of nondeterministic real numbers converge to a deterministic point, and 
\item when a sequence of nondeterministic real numbers converge to a nondeterministic point.
\end{enumerate}
We deal with each of them separately in Sections~\ref{ss:deterministic limits} - \ref{ss:nondet limits}.

While the first one, i.e., how the limit computation should be axiomatized in the logical language is well-established \cite{Spe49,bishop1967foundations,BRIDGES199995},
it has been under debate how to deal with the case when nondeterminism is involved in the limit computation \cite{Mueller18,Konecny18}.
Note that this situation occurs quite naturally even for simple operations such as computing square roots of complex numbers (c.f. Section~\ref{ss:complex-root}.
Early versions of iRRAM therefore already provided a simple nondeterministic limit operation as primitive \cite[\S~10.3]{irram} which the authors of the software recently suggested to replace by a more generic operation for nondeterministic limits \cite{Mueller18}. 
In Section~\ref{s:nondet-dep-choice} we define the \emph{nondeterministic dependent choice}, a simple and natural principle of nondeterminism which we apply in Section~\ref{ss:nondet limits} to show that the nondetermistic limit operation can be derived in our theory and therefore does not need to be introduced as a primitive operation.

Finally, we propose some simple extensions of the theory of real numbers to complex numbers and general euclidean spaces.
\subsection{Basic axiomatization of real numbers}
We assume real numbers by declaring that there is a type for real numbers containing two distinct constants $0$ and $1$ and the standard arithmetical operators.
\begin{flalign}
& \dR : \Type \label{a:R1}\tag{Axiom R1}   \\
& \label{a:R2} 0 : \dR \tag{Axiom R2}  \\
& \label{a:R3} 1 : \dR \tag{Axiom R3}  \\
& \label{a:R5} + : \dR \to \dR \to \dR \tag{Axiom R4}  \\
& \label{a:R6} \times : \dR \to \dR \to \dR \tag{Axiom R5}  \\
& \label{a:R7} - : \dR \to \dR \tag{Axiom R6}  \\
& \label{a:R8} / : \all{x : \dR} x \neq 0 \to \dR \tag{Axiom R7} 
\end{flalign}
 
We further assume the existence of a semi-decidable comparison operator.
\begin{flalign}
& \label{a:R9} < : \dR \to \dR \to \Prop \tag{Axiom R8} \\
& \label{a:R10} \all{x, y : \dR}\semidec(x < y) \tag{Axiom R9} 
\end{flalign}
That is, for any two real numbers its order (as a classical proposition) is semi-decidable.

We assume the properties of the structure classically in a safe way
that does not damage constructivity.
\begin{flalign}
 &   \label{a:R11} \all{x, y:\dR} x + y = y + x \tag{Axiom R10} \\
&   \label{a:R12} \all{x, y,z:\dR} (x + y) + z = x + (y + z) \tag{Axiom R11} \\
& \label{a:R13} \all{x:\dR} x + - x = 0\tag{Axiom R12} \\
& \label{a:R14} \all{x:\dR} 0 + x = x\tag{Axiom R13} \\
&  \label{a:R15} \all{x, y:\dR}x \times y = y \times x\tag{Axiom R14} \\
&  \label{a:R16} \all{x, y, z : \dR} (x \times y) \times z = x \times (y \times z)\tag{Axiom R15} \\
& \label{a:R17} \all{x:\dR}\all{p : x \neq 0} (/\ x\ p) \times x = 1\tag{Axiom R16} \\
& \label{a:R18} \all{x:\dR} 1 \times x= x\tag{Axiom R17} \\
& \label{a:R19}  \all{x,y,z:\dR}x \times (y + z) = x \times y + x \times z\tag{Axiom R18} \\
& \label{a:R4} 1 \neq 0 \tag{Axiom R19}  \\
& \label{a:R20} 1 > 0\tag{Axiom R20} \\
& \label{a:R21} \all{ x, y :\dR }x < y \lor x = y \lor x > y\tag{Axiom R21} \\
& \label{a:R22} \all{x,y : \dR }x < y \to \neg (y < x)\tag{Axiom R22} \\
& \label{a:R23} \all{ x,z,y :\dR }x < y \to y < z \to x < z\tag{Axiom R23} \\
& \label{a:R24} \all{ x, y, z :\dR}y < z \to x + y < x + z\tag{Axiom R24} \\
& \label{a:R25} \all{x, y, z : \dR}0 < x \to y < z \to  x\times y  < x \times z\tag{Axiom R25}  \\
& \all{x : \dR} x > 0 \to \csome{n : \dN} 2^{-n} < x \label{a:R27} \tag{Axiom R26}
\end{flalign}

Note, that for example trichotomy is only assumed classically and  an inhabitant of the type
$\all{x, y : \dR} (x < y) + (x = y) + (y < x)$ is not posed anywhere.
\begin{example}
\label{exa:split}
As real comparison is only semi-decidable, in exact real computation it is often replaced by a so-called $\epsilon$-test which roughly says, for any real numbers $x, y, \epsilon$, when $\epsilon$ is positive, 
we can nondeterminstically decide if $x < y + \epsilon$ or $y < x + \epsilon$.
This corresponds to the multivalued version of the approximate splitting lemma \cite[Lemma~1.23]{schwichtenberg2006constructive}
\[
\textsf{mSplit}:
\all{x, y, \epsilon : \dR}  0 < \epsilon \to  \mval \big(\coprodp{(x < y + \epsilon)}{(y < x + \epsilon)}\big)
\]
which is derivable in the theory.
\end{example}
\begin{proof}
Classical trichotomoy (\ref{a:R21}) lets us construct the term
\[
\all{x, y, \epsilon : \dR}  0 < \epsilon \to x < y + \epsilon \lor y < x + \epsilon.
\]
Since the inequalities are semi-decidable we can apply $\textsf{select}$ (\ref{a:M10}) on this term to derive $\textsf{mSplit}$.
\end{proof}

\subsection{Deterministic limits}\label{ss:deterministic limits}
The first case is exactly the ordinary metric completeness which is realized by the primitive limit operations in exact real number computation software.
We introduce it as an axiom in our theory.
For a sequence $f: \dN \to \dR$ we define the property of $f$ being a fast Cauchy sequence by
\[
\mathsf{is\_Cauchy}\,f :\equiv
\all{n,m : \dN}\allx (x : f\,n).\;\allx(y : f\,m).\; -2^{-n-m}\leq x - y \leq 2^{-n-m},
\]
and the property of $x$ being a limit of the sequence $f$ by 
\[
\mathsf{is\_limit}\,x\,f :\equiv
\all{n : \dN} \allx(y : f\,n).\; -2^{-n} \leq x - y \leq 2^{-n}.
\]
The constructive completeness axiom then says that for any fast Cauchy sequence, we can construct its limit point:
\[
  \all{f: \dN \to \dR} \mathsf{is\_Cauchy}\,f \to \some{x : \dR} \mathsf{is\_limit}\,x\,f \label{a:R26}  \tag{Axiom R27}
\]

Often, we already have some classical description $P$ of a real number $x$ and want to show that we can constructively get said number.
The following variation of the limit can thus be useful.
\begin{lemma}
For any predicate $P : \dR \to \Prop$ we can construct the term
\[
     \usome{z : \dR} P\ z \! \to \! \big(\all{ n :\dN} \some{e :\dR} \!\csome{a : \dR} P\ a \land \abs{e - a} \leq 2^{-n}\big)  \!\to\!
     \some{a : \dR} P\ a. 
\]
\end{lemma}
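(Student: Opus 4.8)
The plan is to use the uniqueness hypothesis to promote the classically available approximation data into a genuine fast Cauchy sequence, construct its limit via the completeness axiom (\ref{a:R26}), and only afterwards verify $P$ of that limit by classical reasoning. First I would unpack the second hypothesis: it assigns to each $n : \dN$ a dependent pair, so taking first projections gives --- with no use of choice, since this is a genuine $\somex$-type and not a truncated existential --- a sequence $e : \dN \to \dR$ together with, for every $n$, a proof of $\csome{a : \dR}P\,a \land \abs{e\,n - a} \leq 2^{-n}$. The $\Type$-level content of the conclusion will be built from $e$ alone; the hypothesis $\usome{z : \dR}P\,z$, being a mere proposition, can only be used to discharge $\Prop$-level side conditions, since $z$ itself is not available in $\Type$.

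Next I would check that $e$ is a fast Cauchy sequence, possibly after a harmless reindexing such as $f := \lambda n.\ e\,(n+1)$ to match the precise shape of $\mathsf{is\_Cauchy}$. Fix $n$ and $m$; the estimate $\abs{e\,n - e\,m} \leq 2^{-n} + 2^{-m}$ lives in $\Prop$, so I may instantiate the truncated existentials to obtain $a$ with $P\,a,\ \abs{e\,n - a} \leq 2^{-n}$ and $b$ with $P\,b,\ \abs{e\,m - b} \leq 2^{-m}$, apply the uniqueness clause of $\usome{z : \dR}P\,z$ to conclude $a = b$, and finish by the triangle inequality. Uniqueness is genuinely needed here: without it the witnesses chosen at different stages could be unrelated and $e$ would not converge. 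The same instantiation at a single stage $n$ also yields $\abs{e\,n - z} \leq 2^{-n}$, i.e.\ $\mathsf{is\_limit}\,z\,f$ holds (classically) for the unique $z$ satisfying $P$.

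Applying the completeness axiom (\ref{a:R26}) to $f$ then produces a real $x$ with $\mathsf{is\_limit}\,x\,f$. It remains to prove $P\,x$, which is a proposition, so classical reasoning is available again: I destruct $\usome{z : \dR}P\,z$ to obtain $z$ with $P\,z$ together with the uniqueness clause, note that $\mathsf{is\_limit}\,z\,f$ holds by the previous step, and combine it with $\mathsf{is\_limit}\,x\,f$ via the triangle inequality to force $\abs{x - z}$ below every $2^{-n}$; hence $x = z$, since otherwise $\abs{x - z} > 0$ by trichotomy (\ref{a:R21}) and the Archimedean property (\ref{a:R27}) would supply an $n$ with $2^{-n} < \abs{x - z}$. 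From $P\,z$ and $x = z$ we get $P\,x$, and the pair $(x, -) : \some{a : \dR}P\,a$ is the required term.

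I expect the main obstacle to be conceptual rather than computational: one must resist trying to read a witness off $\usome{z : \dR}P\,z$ --- that is impossible, as it is a mere proposition --- and instead recognise that the $\Type$-level part of the goal is recoverable from the constructively given approximations $e$, with the uniqueness hypothesis entering only through propositional side conditions, namely to certify that $e$ is Cauchy and to pin down its limit to the $P$-point. The only fiddly detail is matching the exact modulus required by $\mathsf{is\_Cauchy}$, which merely dictates the choice of reindexing.
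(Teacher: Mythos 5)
Your proposal is correct and takes essentially the same route as the paper: extract the approximation sequence from the $\somex$-premise, use the uniqueness hypothesis to show it is a fast Cauchy sequence, construct its limit with the completeness axiom, and identify that limit with the unique point satisfying $P$ via the Archimedean property, all classical steps living in $\Prop$. The paper's proof is just a terser sketch of this same argument.
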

\begin{proof}
The sequence defined in the premise of the statement can be shown to be a fast Cauchy sequence and thus by \ref{a:R26} we can construct its limit point $a$.
Also, using the Archimedan Axiom, we can conclude that the limit point is unique.
Thus, $P\ a$ holds.
\end{proof}
The absolute value function $\lvert x \rvert$
in the above lemma is defined by 
taking the maximum of $x$ and $-x$.
The maximization function is constructed using the limit operation. 
The construction is illustrated in detail later in Section~\ref{s:example-realmax}.
However, when we use the absolute values in the form $|x | \leq y$ for some $x, y : \dR$, 
since the type is in $\Prop$, using classical propositional extensionality we can prove that it is identical to $- y\leq x\leq y$.


\begin{remark}
Classical completeness states that any sequence that is classically Cauchy (i.e. without the additional assumption of a guaranteed speed of convergence), classically admits a limit point, or equivalently any classically non-empty and bounded subset of real numbers classically admits a least upper bound.
The classical completeness can be derived from constructive completeness using our set of axioms.
\end{remark}




\subsection{Deterministic limits of nondeterministic sequences}
Now consider the case where there is a single real number we want to obtain and we have a nondeterministic procedure approximating said number.

Suppose we have a nondeterministic sequence $f : \dN \to \mval\dR$.
We can extend the notion of being a Cauchy sequence and a limit point to nondeterministic sequences by
\[
\mathsf{is\_Cauchy}^\mval\,f :\equiv
\all{n,m : \dN}\allx^\mval (x : f\,n).\;\allx^\mval(y : f\,m).\; |x - y| \leq 2^{-n-m},
\]
and
\[
\mathsf{is\_limit}^\mval\,x\,f :\equiv
\all{n : \dN} \allx^\mval(y : f\,n).\; |x - y| \leq 2^{-n}.
\]
We can prove that for any nondeterministic Cauchy sequence, there 
deterministically and constructively exists the limit.
\begin{theorem}
\label{l:lim}
Within our type theory, we can construct a term of the type
\[
\all{f : \dN \to \mval\dR}
\mathsf{is\_Cauchy}^\mval\;f \to \some{x : \dR}\mathsf{is\_limit}^\mval\;x\;f.
\]
In words, a nondeterministic sequence converges to a point if all possible candidates of the nondeterministic sequence converge to the point.
\end{theorem}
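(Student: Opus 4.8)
The plan is to convert the nondeterministic Cauchy sequence into a single nondeterministic selection of a deterministic Cauchy sequence, apply the deterministic completeness axiom \ref{a:R26}, and then discharge the remaining nondeterminism using the fact that the limit point is unique (so that the type of limit points is subsingleton). Concretely, first I would note that from $\mathsf{is\_Cauchy}^\mval\,f$ together with the unfolding of $\forall^\mval$ via the previous lemma, the set $\picc_\dR\,(f\,n)$ is, for each $n$, a nonempty set of reals all lying within $2^{-n-m}$ of every element of $\picc_\dR\,(f\,m)$; in particular each $\picc_\dR\,(f\,n)$ has diameter $\le 2^{-2n}$, and any two selected values $x_n \in f\,n$, $x_m \in f\,m$ satisfy $|x_n - x_m| \le 2^{-n-m}$. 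So any sequence $g : \dN \to \dR$ with $g\,n \in^\mval f\,n$ for all $n$ is automatically a fast Cauchy sequence in the sense of $\mathsf{is\_Cauchy}$.

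Next I would apply the countable lift $\countablelift$ (derived from \ref{a:M19} in the lemma above) to $f : \dN \to \mval\dR$, obtaining a term of type $\mval\all{n : \dN}\dR$, and then compose with \ref{a:M16}-style reasoning — more precisely, I would first destruct each $f\,n$ via \ref{a:M16} to get $f' : \all{n : \dN}\mval\some{x : \dR}\picc_\dR\,(f\,n)\,x$, apply $\countablelift$ to $f'$ to obtain $\mval\all{n : \dN}\some{x:\dR}\picc_\dR\,(f\,n)\,x$, and lift the pointwise first projection to land in $\mval\some{g : \dN \to \dR}\all{n:\dN}\,g\,n \in^\mval f\,n$. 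Inside this $\mval$, using the observation of the previous paragraph, $g$ is a fast Cauchy sequence, so \ref{a:R26} gives $\some{x : \dR}\mathsf{is\_limit}\,x\,g$. Moreover, since $g\,n \in^\mval f\,n$ and each $\picc_\dR\,(f\,n)$ has diameter $\le 2^{-2n} \le 2^{-n}$, the limit $x$ of $g$ actually satisfies $\mathsf{is\_limit}^\mval\,x\,f$: for any $y \in^\mval f\,n$ we have $|x - y| \le |x - g\,n| + |g\,n - y| \le 2^{-n} + 2^{-2n}$, and a standard $\varepsilon/2$-shift (using $f\,(n+1)$ instead of $f\,n$, say, absorbing constants) tightens this to $\le 2^{-n}$. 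So we obtain a term of type $\mval\some{x : \dR}\mathsf{is\_limit}^\mval\,x\,f$.

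Finally I would remove the outer $\mval$ using \ref{a:M18}: the type $\some{x : \dR}\mathsf{is\_limit}^\mval\,x\,f$ is subsingleton, because if $x$ and $x'$ are both $\mval$-limits of $f$, then picking any $y \in^\mval f\,n$ (which exists since $f\,n : \mval\dR$ is nonempty, via \ref{a:M16}) gives $|x - x'| \le |x - y| + |y - x'| \le 2^{-n+1}$ for all $n$, hence $x = x'$ by the Archimedean axiom \ref{a:R27}; equality of the proof components is then automatic since $\mathsf{is\_limit}^\mval\,x\,f$ is a proposition and the identity type carries proof irrelevance in $\Prop$. Applying $\singletonextract$ therefore collapses $\mval\some{x : \dR}\mathsf{is\_limit}^\mval\,x\,f$ to $\some{x : \dR}\mathsf{is\_limit}^\mval\,x\,f$, which is the goal.

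The main obstacle I anticipate is the bookkeeping in the second step: threading the destructor \ref{a:M16} and the countable lift together so that the selected deterministic sequence $g$ genuinely satisfies $g\,n \in^\mval f\,n$ (rather than merely $g\,n \in^\mval f\,(\sigma\,n)$ for some reindexing), and then getting the error bounds to come out as exactly $2^{-n}$ rather than a constant multiple of it — the latter is a routine but slightly fiddly index shift that has to be set up correctly at the point where the Cauchy witness for $g$ is produced. Everything else (uniqueness, subsingleton-ness, the final $\singletonextract$) is standard given the axioms already in place.
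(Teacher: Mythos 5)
Your proposal is correct and follows essentially the same route as the paper's proof: nondeterministically select a section $s$ with $\all{n:\dN}s\,n \in^\mval f\,n$ (the paper leaves implicit the $\countablelift$/\ref{a:M16} bookkeeping you spell out), note that any such section is fast Cauchy and apply the deterministic completeness axiom, show the resulting point is an $\mval$-limit of $f$, and eliminate the outer $\mval$ via uniqueness of the limit and \ref{a:M18} (the paper just applies \ref{a:M18} at the start rather than the end, which is the same argument). The bound-tightening you worry about is handled most cleanly not by an index shift but by letting the auxiliary index $m$ range over all naturals and concluding classically, which is admissible since the estimate lives in $\Prop$.
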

\begin{proof}
Given $f : \dN \to \mval\dR$ with $\mathsf{is\_Cauchy}^\mval\;f$,
we first prove that the limit point is unique, i.e., for any $x, y : \dR$
\[
    \mathsf{is\_limit}^\mval\;x\;f \to \mathsf{is\_limit}^\mval\;y\;f \to x = y.
\]
Thus, by \ref{a:M18} it suffices to show that 
\[
 \mval \some{x : \dR}\mathsf{is\_limit}^\mval\;x\;f.
\]
We can nondeterministically choose a sequence through $f$, i.e. we can show
\[
    \mval \some{s : \dN \to \dR} \all{n : \dN} s\ n \in^\mval f\ n.
\]
By \ref{a:M4} it suffices to show
\[
     \some{s : \dN \to \dR} \all{n : \dN} s\ n \in^\mval f\ n \to \some{x : \dR}\mathsf{is\_limit}^\mval\;x\;f.
\]
As the sequence $s$ is a fast Cauchy sequence it has a limit point $x$ by \ref{a:R26} and we can show $\mathsf{is\_limit}^\mval\;x\;f$.

\end{proof}

As for the deterministic case we often already have a classical description of real numbers 
that we want to construct. For example, 
when we compute a square root of a real number $x : \dR$, we first define it classically 
by $S : \dR \to \Prop :\equiv \lam{y : \dR} x = y \times y$
then prove $\some{y : \dR}S\,y$.

For any real number $x : \dR$ and a classical description of real numbers $S : \dR \to \Prop$, define the notation:
$
x \sim_n S :\equiv \csome{y : \dR}(S\,y) \times |x - y| \leq 2^{-n}$ saying that $x$ approximates a real number represented by $S$ by $2^{-n}$.
Then, we can derive the following version of metric completeness.
\begin{corollary}\label{cor:ndd-limit}
We can derive the term
\[
\usome{x: \dR}S\,x \to
\big(\all{n : \dN}\mval\some{y : \dR} y \sim_n S\big) \to
\some{y : \dR}S\,y.
\]
\end{corollary}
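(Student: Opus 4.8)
The plan is to derive this as a corollary of Theorem~\ref{l:lim}. Write $x_0 : \dR$ for the real number supplied by $\usome{x : \dR}S\,x$, so that $S\,x_0$ holds and any $y$ with $S\,y$ equals $x_0$. From the second hypothesis $h : \all{n : \dN}\mval\some{y : \dR}(y\sim_n S)$ I would form the nondeterministic sequence $f : \dN \to \mval\dR$ by post-composing with the first projection, $f\,n :\equiv \lift^\mval\,\fst\,(h\,n)$, where $\fst : (\some{y : \dR}(y\sim_n S)) \to \dR$. By Example~\ref{ex:1002} (the picture of a lifted map), every possible outcome of $f\,n$ satisfies $\sim_n S$: if $z \in^\mval f\,n$ then classically there is $w \in^\mval h\,n$ with $z = \fst\,w$, and $w$ carries a proof of $\fst\,w \sim_n S$; unfolding $\sim_n$ and using uniqueness of $x_0$ then gives $\abs{z - x_0} \leq 2^{-n}$.

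With this in hand the nondeterministic fast-Cauchy estimate for $f$ is immediate from the triangle inequality: for $x \in^\mval f\,n$ and $y \in^\mval f\,m$ one has $\abs{x - y} \leq \abs{x - x_0} + \abs{x_0 - y} \leq 2^{-n} + 2^{-m}$, which — after the trivial reindexing $n \mapsto n+1$, iterated as needed to meet the precise modulus in $\mathsf{is\_Cauchy}^\mval$ — yields $\mathsf{is\_Cauchy}^\mval\,f$. Since all of this reasoning lives on the propositional pictures $\picc_{\dR}(f\,n)$, it may use classical logic freely (\ref{a:TT1}, \ref{a: TT2}). Theorem~\ref{l:lim} then provides $x : \dR$ with $\mathsf{is\_limit}^\mval\,x\,f$.

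It remains to check $S\,x$. The axioms for $\pic$ guarantee that each $f\,n$ is non-empty, so for every $n$ we may pick (this takes place inside a proof of the proposition $x = x_0$, hence may be done classically) an outcome $y \in^\mval f\,(n+1)$; then $\abs{x - y} \leq 2^{-(n+1)}$ by $\mathsf{is\_limit}^\mval\,x\,f$ and $\abs{y - x_0} \leq 2^{-(n+1)}$ by the first paragraph, so $\abs{x - x_0} \leq 2^{-n}$. As this holds for all $n$, \ref{a:R27} together with trichotomy \ref{a:R21} forces $x = x_0$, whence $S\,x$ holds and $(x, -) : \some{y : \dR}S\,y$ is the desired term.

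The only real work is the bookkeeping of transporting the approximation property $\sim_n$ across the monadic lift $\lift^\mval$ — this is exactly where Example~\ref{ex:1002} and the $\pic$-axioms enter — together with pinning the abstractly constructed limit to the classically specified point $x_0$ via uniqueness and the Archimedean axiom; the reindexing needed to hit the exact Cauchy modulus is routine. Alternatively, one can bypass Theorem~\ref{l:lim} altogether: commute the monad past the countable quantifier with the countable lift $\countablelift$ (derivable from \ref{a:M19}, applied with $P\,n :\equiv \some{y : \dR}(y\sim_n S)$) to obtain $\mval\all{n : \dN}\some{y : \dR}(y\sim_n S)$; then apply, inside the monad, the deterministic variation of metric completeness from Section~\ref{ss:deterministic limits} (whose two premises $\usome{x : \dR}S\,x$ and $\all{n : \dN}\some{e : \dR}(e\sim_n S)$ are exactly what is now available) to get $\mval\some{y : \dR}S\,y$; and finally strip the monad with $\singletonextract$ (\ref{a:M18}), which applies because $\some{y : \dR}S\,y$ is a subsingleton by uniqueness of $x_0$.
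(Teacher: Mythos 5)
Your primary route is exactly the paper's intended derivation: the corollary is stated there without a written proof as an immediate consequence of Theorem~\ref{l:lim}, obtained just as you do by projecting the hypothesis to a nondeterministic sequence, transporting $\sim_n S$ through $\lift^\mval$ via the picture, getting the Cauchy bound from uniqueness plus the triangle inequality, and identifying the resulting limit with the unique point of $S$ via the Archimedean axiom. Your alternative route essentially inlines the proof of Theorem~\ref{l:lim} itself (countable lift, deterministic completeness inside $\mval$, then $\singletonextract$ using that $\some{y : \dR}S\,y$ is a subsingleton), so it is the same material rearranged; note also that no reindexing gymnastics are needed once the fast-Cauchy modulus is read, as in the formalization, as $2^{-n}+2^{-m}$.
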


\subsection{Nondeterministic limits}\label{ss:nondet limits}
Suppose we are given a classical description of real numbers $S : \dR \to \Prop$ that is classically sequentially closed, i.e., 
we define $\mathsf{is\_seq\_closed}\,S$ to be the following type:
\[
\all{f : \dN \to \dR}(\all{n : \dN}(f\,n) \sim_n S) \to \csome{x : \dR}(S\,x) \times \mathsf{is\_limit}\,x\,f.
\]
A \emph{nondeterministic refinement procedure} is a procedure that for each natural number $n$ and real number $x_n$ with a promise $x_n \sim_n S$ nondeterministically computes a $2^{-n-1}$ approximation to some real number in $S$ which is at most $2^{-n-1}$ apart from $x_n$. 
Note that $x_n$ and $x_{n+1}$ do not necessarily approximate the same number in $S$.

Formally, we define a nondeterministic refinement procedure as a function $f$ of type
\[
f : \all{n : \dN} \all{x : \dR} x \sim_{n} S \to \mval\some{y : \dR} \big(|x - y| \leq 2^{-n-1}\big) \times \big( y \sim_{n+1} S\big).
\]
We show that given such a nondeterministic refinement procedure, we can apply the $\mval$-dependent choice to nondeterministically get a point in $S$.
We call such a point the limit point of the procedure.
\begin{theorem}\label{thm:nondeterministic-limit}
We can construct a term of type
\begin{align*}
&\all{S : \dR \to \Prop} \mathsf{is\_seq\_closed}\,S \to \\
&\quad\mval\some{y : \dR}y \sim_0 S \to \\
&\quad\big(\all{n : \dN} \all{x : \dR} x \sim_{n} S \to \mval\some{y : \dR} \big(|x - y| \leq 2^{-n-1}\big) \times \big( y \sim_{n+1}S\big)\big)\to\\
&\quad\mval\some{y : \dR}S\,y
\end{align*}
\end{theorem}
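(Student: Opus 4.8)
The plan is to drive the given refinement procedure with the nondeterministic dependent choice \ref{a:M19}: I will extract from it a single nondeterministic sequence that at stage $n$ is a $2^{-n}$-approximation to some point of $S$ and whose consecutive terms are $2^{-n-1}$-close, take its (deterministic) limit with the constructive completeness axiom \ref{a:R26}, and finally show that this limit lies in $S$ by combining the sequential closedness of $S$ with uniqueness of limits.

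Concretely, I would first bind the hypothesis $\mval\some{y:\dR}\,y\sim_0 S$ (via \ref{a:M4} and $\mult^\mval$) so as to work under a fixed $x_0$ with $x_0\sim_0 S$; this also guarantees that the type family below is inhabited at stage $0$, so the later use of \ref{a:M19} is not vacuous. I set $P\,n:\equiv\some{x:\dR}\,x\sim_n S$ (with the stage-$0$ state taken to be $x_0$, i.e.\ formally $P\,0$ may be replaced by a singleton) and, for $p=(x,\_):P\,n$ and $q=(y,\_):P\,(n+1)$, set $R\,n\,p\,q:\equiv\abs{x-y}\le 2^{-n-1}$. The refinement procedure $f$ then yields an $\mval$-trace of $R$: for $p=(x,\pi):P\,n$ the term $f\,n\,x\,\pi$ has type $\mval\some{y:\dR}(\abs{x-y}\le 2^{-n-1})\times(y\sim_{n+1}S)$, which a trivial reshuffling $(y,(d,\rho))\mapsto((y,\rho),d)$ under \ref{a:M4} turns into $\mval\some{q:P\,(n+1)}\,R\,n\,p\,q$. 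Applying \ref{a:M19} to this trace gives a term of $\mval\some{g:\all{n:\dN}P\,n}\,\all{m:\dN}\,\abs{\pi_1(g\,m)-\pi_1(g\,(m+1))}\le 2^{-m-1}$; its attached coherence condition plays no role here.

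It then remains to supply a map $\big(\some{g:\all{n:\dN}P\,n}\,(\ldots)\big)\to\some{y:\dR}S\,y$ and lift it over this term with \ref{a:M4}. Given such a $g$, put $s\,n:\equiv\pi_1(g\,n)$; then $s\,n\sim_n S$ for all $n$ (from $\pi_2(g\,n)$, with $x_0\sim_0 S$ at stage $0$) and $\abs{s\,n-s\,(n+1)}\le 2^{-n-1}$, so a routine triangle-inequality estimate shows $s$ is a fast Cauchy sequence; by \ref{a:R26} it has a limit $x:\dR$ with $\mathsf{is\_limit}\,x\,s$. On the other hand $\mathsf{is\_seq\_closed}\,S$ applied to $s$ gives $\csome{z:\dR}(S\,z)\times\mathsf{is\_limit}\,z\,s$; since the goal $S\,x$ is a proposition I may unpack this classical witness, and since fast Cauchy limits are unique (from $\mathsf{is\_limit}\,x\,s$ and $\mathsf{is\_limit}\,z\,s$ one gets $\abs{x-z}\le 2^{-n+1}$ for every $n$, hence $x=z$ by the Archimedean axiom \ref{a:R27}), $S\,x$ follows. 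Thus $g\mapsto(x,-):\some{y:\dR}S\,y$, and lifting this map over the output of \ref{a:M19}, precomposed with the initial bind, is the desired term of $\mval\some{y:\dR}S\,y$.

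I expect the main difficulty to be organizational rather than mathematical: choosing $P$ and $R$ so that $f$ genuinely is an $\mval$-trace, threading the initial nondeterministic approximation and the output of \ref{a:M19} through $\mult^\mval$ and \ref{a:M4}, and keeping the classical content (the $\csome$ returned by sequential closedness, and uniqueness of limits) strictly inside the proof of the proposition $S\,x$, so that no computational data is extracted from a merely classical existence. Note that \ref{a:M18} is not needed here, since the conclusion already lives under $\mval$; beyond the monad laws, the only nontrivial input is \ref{a:M19}.
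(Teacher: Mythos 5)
Your proposal follows essentially the same route as the paper's proof: you take $P\,n :\equiv \some{x : \dR} x \sim_n S$ and $R\,n\,p\,q :\equiv \abs{\pi_1\,p - \pi_1\,q} \leq 2^{-n-1}$, turn the refinement procedure into an $\mval$-trace, apply the $\mval$-dependent choice seeded by the initial nondeterministic approximation, and then lift the map that takes the resulting consecutively-close sequence to its limit via \ref{a:R26}, using sequential closedness and uniqueness of limits to conclude membership in $S$. Your spelled-out details (binding the initial approximation, the uniqueness-of-limits step, keeping the $\csomex$ witness inside the $\Prop$-level goal) are just elaborations of what the paper leaves implicit, so the approach is correct and the same.
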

\begin{proof}
Let
\[
P\,n :\equiv \some{x : \dR} x \sim_n S\quad\text{and}\quad
R\,n\,x\,y :\equiv |\pi_1\,x - \pi_1\,y| \leq 2^{-n-1}.
\]
The refinement procedure $f$ can be easily adjusted to become an $\mval$-trace 
of $R$.
The $\mval$-dependent choice on $f$ with an initial approximation $x_0 : \mval\some{y : \dR}y \sim_n S$,
yields a sequence $g : \dN \to \dR$ that is consecutively close, i.e.,
\[ \all{n:\dN} |(g\,n) - (g\,(n+1))| \leq 2^{-n-1}, \] 
and converges to an element in $S$, i.e., $\all{n : \dN}g\,n \sim_n S$.
As $S$ is sequentially closed and we can prove that $g$ is Cauchy, applying the ordinary limit on
$S$ constructively yields a point in $S$. 
Hence, applying $\lift^\mval$ on the procedure and postcomposing it to the result of 
the $\mval$-dependent choice yields the nondeterministic limit.
\end{proof}
The nondeterministic refinement procedure defined above is quite generic, and it is often difficult to define such a procedure.
Specifically, it requires to consecutively refine any possible previous approximation to a better approximation of a limit.
In practice, the previous approximation will not be arbitrary, but already derived according to the rules of the procedure.  
It is therefore useful to think of cases where all possible approximations throughout the indefinite refinement procedure share some invariant properties.

To make this more clear, let us again consider the nondeterministic function 
\[ f\,n\,x :\equiv \begin{cases}
0\text{ or }1 &\text{if } n = 0,\\ x&\text{otherwise,}\end{cases}\]
similar to the one from Section~\ref{s:nondet-dep-choice}.

Starting with $1/2$, the function nondeterministically generates the two sequences  $1/2,0,0,\cdots$ and $1/2,1,1\cdots$.
Both are Cauchy sequences that converge to $0$ and $1$ respectively, thus we would consider $0$ and $1$ possible limit points.
However, note that $f$ is not an admissible refinement procedure in the previous sense:
When $2^{-n}$ is given as a $2^{-n}$ approximation to $0$, $f$ returns $2^{-n}$ which is not a $2^{-n-1}$ approximation to any of $0$ or $1$. 
In other words, we lose the information that when applying $f$,
we only encounter either $0$ or $1$ when $n > 0$.
Thus, to apply Theorem~\ref{thm:nondeterministic-limit} we would have to artificially modify the procedure to deal with arbitrary approximations when $n > 0$.
Instead, we would like to use the invariant property of $f$ directly to build a more effective nondeterministic limit operation.

Let $S : \dR \to \Prop$ be a classical description of real numbers that is sequentially closed.
We declare an invariant property of approximations $Q : \dN \to \dR \to \Type$ that is preserved throughout the refinements.
We can encode $Q$ in $P$ at the step of applying the $\mval$-dependent choice:
\[
P\,n :\equiv \some{x : \dR} (x \sim_n S) \times Q\,n\,x.
\]
A similar derivation as in the previous section yields the following more informative limit operation:
\begin{theorem}\label{thm: mlimit advice}
Within our type theory, we can construct a term of type
\begin{align*}
&\all{S : \dR \to \Prop}\all{Q : \dN \to \dR \to \Type} \mathsf{is\_seq\_closed}\,S \to \\
&\quad\mval\some{y : \dR} (y \sim_0 S) \times Q\,0\,y \to \\
&\quad\big(\all{n : \dN} \all{x : \dR} (x \sim_{n} S) \times Q\,n\,x \to\\ 
&\qquad\mval\some{y : \dR} (|x - y| \leq 2^{-n-1}) \times (y \sim_{n+1}S) \times (Q\,(n+1)\,y)\big) \to\\
&\quad\mval\some{y : \dR}S\,y
\end{align*}
\end{theorem}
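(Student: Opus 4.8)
The plan is to mimic the proof of Theorem~\ref{thm:nondeterministic-limit}, but carrying the invariant predicate $Q$ along inside the type family on which we run the $\mval$-dependent choice. Concretely, set
\[
P\,n :\equiv \some{x : \dR}\,(x \sim_n S) \times Q\,n\,x
\quad\text{and}\quad
R\,n\,x\,y :\equiv |\pi_1\,x - \pi_1\,y| \leq 2^{-n-1},
\]
where $\pi_1$ extracts the underlying real of a $P$-element. First I would check that the third hypothesis of the theorem is exactly (up to reassociating the $\somex$/$\times$ nesting and re-packaging with $\pi_1$) an $\mval$-trace of $R$: given $n$ and $p : P\,n$, destructure $p$ as $(x, (h_S, h_Q))$, feed $x, h_S, h_Q$ to the hypothesis, obtaining $\mval\some{y:\dR}(|x-y|\leq 2^{-n-1}) \times (y \sim_{n+1} S) \times Q\,(n+1)\,y$, and then $\lift^\mval$ the obvious reshuffling map into $\mval\some{y' : P\,(n+1)}\,R\,n\,p\,y'$. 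The second hypothesis $\mval\some{y:\dR}(y\sim_0 S)\times Q\,0\,y$ is (again after reshuffling) exactly $\mval(P\,0)$, which serves as the initial approximation.

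Next I would apply \ref{a:M19} ($\mval$-dependent choice) to this trace and initial point, yielding a term of type
\[
\mval\some{g : \all{n:\dN}P\,n}\;\all{m:\dN}R\,m\,(g\,m)\,(g\,(m+1)).
\]
From such a $g$, set $s\,n :\equiv \pi_1\,(g\,n) : \dR$. The $R$-chain condition gives $\all{n}\,|s\,n - s\,(n+1)| \leq 2^{-n-1}$, from which a routine telescoping estimate shows $s$ is a fast Cauchy sequence (indeed $|s\,n - s\,m| \leq 2^{-\min(n,m)}$ or similar), and the $x \sim_n S$ components of each $g\,n$ give $\all{n}\,s\,n \sim_n S$. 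Since $S$ is sequentially closed, $\mathsf{is\_seq\_closed}\,S$ applied to $s$ classically produces $x : \dR$ with $S\,x$ and $\mathsf{is\_limit}\,x\,s$ — but we need this constructively, so as in Theorem~\ref{l:lim} and Corollary~\ref{cor:ndd-limit} I would instead build the limit point of $s$ constructively via \ref{a:R26} (the constructive completeness axiom), then use the sequential-closedness hypothesis together with uniqueness of limits (Archimedean axiom) to conclude $S\,x$ for that constructively-obtained $x$. This gives a deterministic map $\big(\some{g : \all{n:\dN}P\,n}\,\all{m}R\,m\,(g\,m)\,(g\,(m+1))\big) \to \some{y:\dR}S\,y$; applying $\lift^\mval$ to it and postcomposing with the output of the $\mval$-dependent choice yields the desired $\mval\some{y:\dR}S\,y$.

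The main obstacle is not any single hard step but the bookkeeping: getting the $\somex$/$\times$ associativity and the $\pi_1$-projections to line up so that the hypotheses really are an $\mval$-trace and an initial point in the precise sense \ref{a:M19} demands, and making sure the passage from "classically $S$ is sequentially closed" to "constructively the limit lies in $S$" is done correctly (one must extract the Cauchy witness $s$ \emph{before} invoking completeness, take the constructive limit, and only then use the classical closure statement plus uniqueness to land in $S$ — exactly the pattern already used for Corollary~\ref{cor:ndd-limit}). I expect the coherence condition attached to $\mval$-dependent choice (the $\mathsf{to\_fiber}^\mval$ compatibility) to play no role here, since we only use the first projection of the choice output, so I would not dwell on it. Everything else — the telescoping Cauchy estimate, uniqueness via the Archimedean axiom — is routine and can be cited from the earlier lemmas.
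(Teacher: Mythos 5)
Your proposal is correct and follows essentially the same route as the paper: the paper proves this theorem by the same derivation as Theorem~\ref{thm:nondeterministic-limit}, merely encoding the invariant via $P\,n :\equiv \some{x : \dR}(x \sim_n S) \times Q\,n\,x$ before applying $\mval$-dependent choice, then extracting the consecutively-close sequence, taking its constructive limit, and using sequential closedness plus uniqueness to land in $S$, exactly as you describe. Your extra care about bookkeeping and about invoking constructive completeness before the classical closure statement matches the pattern the paper relies on (Corollary~\ref{cor:ndd-limit}), so no gap remains.
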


Note that the required nondeterministic refinement procedure 
accepts additional information on its input $Q\,n\,x : \Type$, on which we can do effective reasoning as it is indexed through $\Type$. For example, in the above case of $0$ and $1$,
we can let $Q \, n\, x :\equiv n> 0 \to (x = 0) + (x = 1)$ 
such that in the beginning of each refinement step $n > 0$,
we can effectively test if $x$ is $0$ or $1$. 
The price to pay is that 
in each step we have to construct a Boolean term
which indicates whether the refinement is $0$ or $1$ which then is used in the next refinement step.
Figure~\ref{fig:Mtrace} illustrates this example.
\begin{figure}
    \centering
    \subfloat[An $\mval$-\emph{trace} of $R\,n\,x\,y \equiv |x-y| \leq 2^{-n}$ converging to \(S=\{0,1\}\)]{{
    \includegraphics[width=0.47\textwidth]{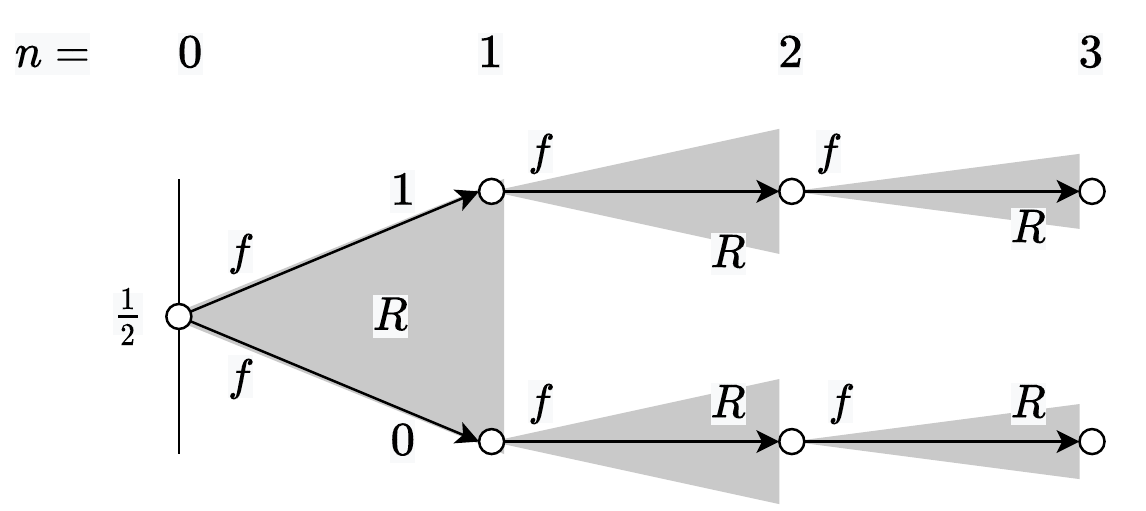}
    }}
    \hfill
    \subfloat[\centering Without the invariant $Q$, the refinement procedure $f$ is not admissible.]{{
    \includegraphics[width=0.47\textwidth]{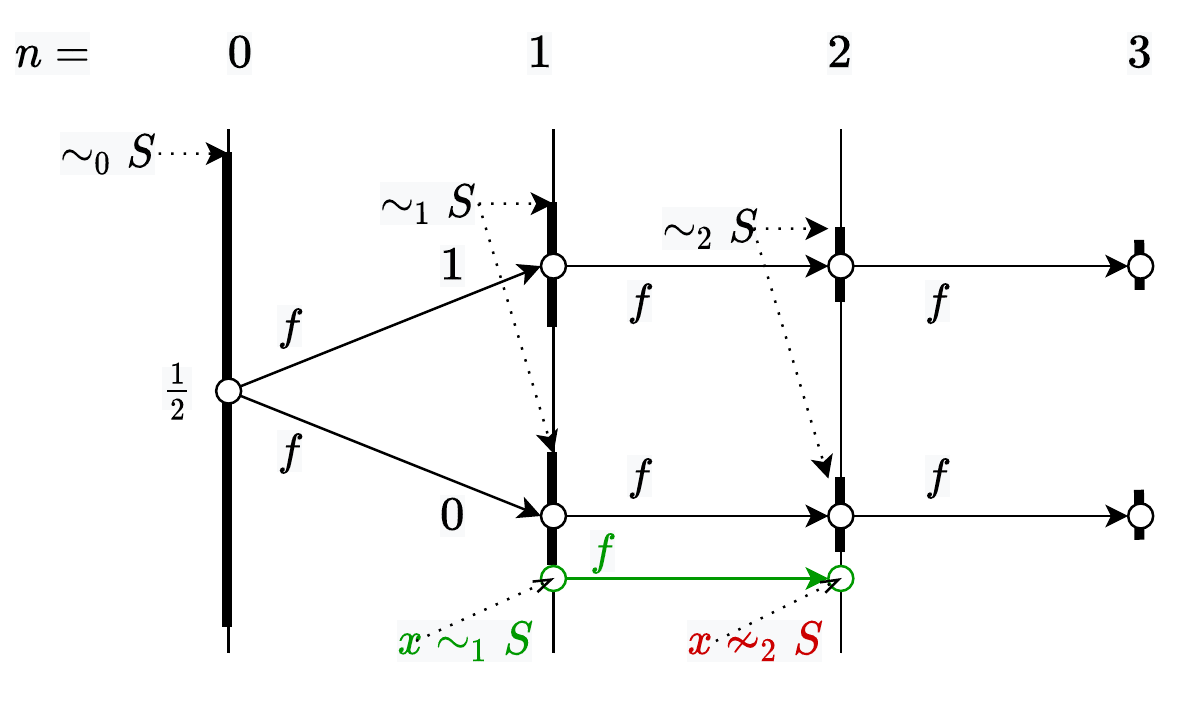}
    }}
    \caption{Using an invariant property of $f$ to define a limit}
    \label{fig:Mtrace}
\end{figure}
\begin{remark}
The iRRAM C++ framework provides a similar operation \code{limit\_mv}. 
The operator computes the limit of a nondeterministic sequence using an additional discrete hint \code{choice} that restricts the possible values of the limit \cite{muller1998implementing}.
\end{remark}
\subsection{Complex numbers and general Euclidean spaces}
Having defined real numbers, it is natural to think of possible extensions such as complex numbers or real vector spaces.
Adding complex numbers to our theory is simple.
We define a complex number as a pair of real numbers, i.e.\ $\dC :\equiv \dR \times \dR$ and define the field operations in the standard way.

We further define a maximum norm $\lvert \cdot \rvert : \dC \to \dR$ and a distance $d\ z_1 \ z_2 := \lvert z_1 - z_2 \rvert$.
For any complex number, we can separate the real and imaginary parts and treat them like ordinary real numbers to prove some of their properties.
Specifically, we can extend all the limit operators defined in this section to complex numbers.
The use of the maximum norm for the distance makes the proofs straightforward.

More generally, for any $n \colon \dN$ we can define the type $\textsf{euclidean}_n$ simply as an $n$-element list of $\dR$ with a maximum norm, and define the vector space operations point-wise.

\section{Soundness by Realizability}
\label{ss:axiomatization:realizability}

To prove soundness of the set of axioms that we present,
we extend the standard realizability interpretation 
that is introduced in Section~\ref{ss:background:typetheory}.
That means we need to declare the interpretations 
of our axiomatic types which are assemblies
and the interpretations of axiomatic terms which are computable functions.
Then, the interpretations of complex types and terms are defined automatically by the original inductive construction of the interpretation.
For example, if we define the interpretation of $\dR$, 
the interpretation of $\dR \to \dR$ is defined 
automatically as the exponent of the interpretation 
of $\dR$.


We interpret $\dK$, of course, as the Kleenean assembly $\bK$ (\ref{a:K1}).
The axiomatic term constant $\ctrue$ is interpreted as the computable point $\ctt : \mathbf{1} \to \bK$
(\ref{a:K2})
and $\cfalse$ is interpreted as the computable point $\cff : \mathbf{1} \to \bK$ (\ref{a:K3}).
See that the interpretation of the type $\ctrue \neq \cfalse$ is $\mathbf{1}$ which is trivially 
inhabited. Hence, we confirm the validity of \ref{a:K4}.

Similarly, we interpret the algebraic operations in \ref{a:K5}-\ref{a:K7}
to the corresponding computable operations of Kleeneans $\bK$ in Figure~\ref{t:erc:kleene}.
See that the related classical properties \ref{a:K8}-\ref{a:K10} get interpreted to the final assembly $\mathbf{1}$.
For the last axiom \ref{a:K11}, see that the interpretation of the type 
$\all{x : \dK} \defined{x} \to \upc{x} + \downc{x}$
is the assembly of functions that decides if $x = \ctt$ or $x = \cff$ given that $x \neq \bot$.
Such a function is realizable by a machine that simply iterates through any given realizer of $x$ checking 
if it encounters $1$ or $2$,
as it is ensured that the realizer is not $0^\IN$ which is the only realizer of $\bot$.

We interpret $\dR$ as any effective assembly of reals $\bR$ (\ref{a:R1}).
We map $0, 1 : \dR$ to the computable points $0, 1 : \mathbf{1} \to \dR$ (\ref{a:R2} - \ref{a:R3})
and map $+, \times, -$ to the corresponding computable functions in $\Casm$ seen in Fact~\ref{f:erc}.
One subtle part is the multiplicative inversion (\ref{a:R8}) which is axiomatized in our type theory as 
\[
/ : \all{x : \dR} x \neq 0 \to \dR .
\]
See that the interpretation of the type is
an assembly of the dependent functions:
\[
f : 
\dR \ni x \mapsto 
\begin{cases}
|\bR^{\mathbf{0}}| &\text{if } x = 0, \\
|\bR^{\mathbf{1}}| &\text{otherwise }, \\
\end{cases}
\]
which gets defined by 
nominating a partial function
$\IR \setminus \{0\} \to \IR$.
Due to the triviality of a function from the empty-set, 
a realizer of such $f$ is allowed to diverge when it 
receives a realizer of $0 \in \IR$.
In our case, we of course interpret 
the division to the multiplicative inversion
which is not defined at $0\in \IR$.

We interpret $< : \dR \to \dR \to \Prop$ as the following function 
\[
x\,\sem{<}\, y := \begin{cases}
\mathbf{1}& \text{if } x < y,  \\
\mathbf{0} &\text{otherwise}
\end{cases}
\]
which is trivially computable (\ref{a:R9}).
Under this interpretation,
we can validate \ref{a:R10}
which is direct from 
that the order relation of real numbers is semi-decidable.

All the classical properties
\ref{a:R10}-\ref{a:R27} are validated that their interpretations get to be $\mathbf{1}$ which requires some clerical works to verify.
The validity of the deterministic limit axiom \ref{a:R26}
is also direct from the computability of limits
of fast Cauchy sequences that is seen from Section~\ref{ss:computable analysis}.

In order to interpret the nondeterminism monad $\mval$, 
we specify an endofunctor $\semmval : \Casm \to \Casm$ in our model.
For an assembly $\mathbf{X}$, $\semmval(\mathbf{X})$ is an assembly of the set of nonempty subsets of $\mathbf{X}$
whose realization relation $\Vdash_{\semmval(\mathbf{X})}$ is defined by
\[
\varphi \Vdash_{\semmval(\mathbf{X})} S \quad :\Leftrightarrow \quad \exists x.\  x\in S \land \varphi \Vdash_\mathbf{X} x\ .
\]
In words, $\varphi$ realizes a nonempty subset $S$ of $\mathbf{X}$ if 
$\varphi$ realizes an element $x$ of $S$ in the original $\bA$.
Observe that for any assemblies $\mathbf{X}, \mathbf{Y}$, a multivalued function $f : \mathbf{X} \mto \mathbf{Y}$ is computable in the sense of computable analysis \cite{w00} if and only if it appears as a morphism $f : \mathbf{X} \to \semmval(\mathbf{Y})$. 

The endofunctor $\semmval$ can be easily proven to be a monad with the unit $\eta_\mathbf{X} : x \mapsto \{x\}$,
the multiplication $\mu_\mathbf{X} : S \mapsto \bigcup_{T \in S} T$, and the action on morphisms $\semmval(f) : S \mapsto \bigcup_{x \in S}\{f(x)\}$.
The computability of the functions hold trivially.

We interpret our nondeterminism 
type former $\mval : \Type \to \Type$
as the monad (\ref{a:M1}) $\semmval : \Casm \to \Casm$
and interpret the unit (\ref{a:M2}), 
the multiplication (\ref{a:M3}), and 
the function lift (\ref{a:M4})
appropriately as the computable functions
of the monad structure.
Then, the types of the coherence conditions, 
\ref{a:M5}-\ref{a:M9}, get interpreted to $\mathbf{1}$.

To validate the core characterization of the nondeterminism which is based on the relation with the classical non-empty power-set monad $\cval$, let us see the interpretation of the monad in the model.
See that for any type $X$, it holds that
\[
|\sem{\cval(X)}| = \{
(S: \sem{X} \to \{\mathbf{1}, \mathbf{0}\}, \star) \mid
\exists x.\; S\;x = \mathbf{1}
\}
\]
where any $\varphi$ realizes any $(S, \star)$
as any $S :\sem{X} \to \sem{\Prop}$ is trivially 
computable.

Calling this induced monad $\mathbf{P}_+ : \Casm \to \Casm$,
see that the underlying set 
of $\mathbf{P}_+(\mathbf{X})$ is set-theoretically isomorphic 
to the underlying set of $\mathbf{M}(\mathbf{X})$ for any 
assembly $\mathbf{X}$. And, for any assembly 
$\bA$, the assembly $\mathbf{P}_+(\mathbf{X})$ has a trivial realization relation.
Hence, there is a monic (and set-theoretically surjective) natural transformation defined for each $\bA$:
\[
\mathbf{picture}_\mathbf{X} : \mathbf{M}(\mathbf{X}) \ni S \mapsto (S, \star) \in
\mathbf{P}_+(\mathbf{X})
\]
which can be trivially computed 
as $\mathbf{P}_+(\mathbf{X})$ has the trivial realization relation.
We interpret the natural transformation $\pic$ (\ref{a:M10})
to the natural transformation $\mathbf{picture}$.
Other classical properties 
\ref{a:M11} - \ref{a:M14} can be validated.

Since the underlying sets of the interpretations 
$\cval(\mval X)$ and $\cval (\cval X)$ are
set-theoretically isomorphic and they both have
trivial realization relation, it can be easily 
proved that
\[
\mathbf{P}_+(\mathbf{picture}_{\mathbf{X}})
: 
\mathbf{P}_+(\semmval(\mathbf{X})) \to
\mathbf{P}_+(\mathbf{P}_+(\mathbf{X}))
\]
is an isormorphism in $\Casm$, for any $\mathbf{X}$,
whose inverse is also trivially computable.
This observation validates \ref{a:M15}.
The destruction method \ref{a:M16} becomes obvious under this interpretation.

When we have two Kleeneans $x, y \in \bK$, 
excluding the domain where both are non-$\ctt$, 
we can devise a machine which iterates through the given realizers of $x$ and $y$ to check if $1$ appears in the given realizer of $x$ or in the given realizer of $y$.
Calling this machine $\varphi$, for $\varphi_x \Vdash_\bK x$
and $\varphi_y \Vdash_\bK y$, 
$\varphi(\varphi_x, \varphi_y)$ realizes 
$\{\ctt, \cff\}$ when $x = y = \ctt$,
$\{\ctt\}$ when $x = \ctt\land y \neq \ctt$,
and $\{\cff\}$ when $x \neq \ctt \land y = \ctt$
in $\semmval(\bK)$.
See that this procedure, $\semselect$, which is obviously computable, realizes the interpretation of \ref{a:M17}.
When $\mathbf{A}$ is sub-singleton, $\semmval(\mathbf{A})$ is actually identical to $\mathbf{A}$. Hence, \ref{a:M18} gets validated.
As it was discussed in the section, the last axiom \ref{a:M19} gets validated by the fact that we can form recursion on the level of realizers.

Discussions thus far conclude the soundness of our axioms:
\begin{metatheorem}
The axiomatization is sound admitting a realizability interpretation.
\end{metatheorem}

\section{Relating Classical Analysis}
\label{s:relator}
As described in the previous sections, our theory allows classical reasoning for non-computational properties. 
For example, to prove a statement of the form 
\[ \all{x : \dR}\some{y : \dR}P\ x\ y \]
where $P : \dR \to \dR \to \Prop$,  we assume any $x : \dR$, provide an explicit $y : \dR$ and prove that $P\ x\ y$ holds. 
$P\ x\ y : \Prop$ is a classical statement and thus we do not need to restrict ourselves to constructive reasoning in its proof.
In terms of program extraction (cf. Section~\ref{s:implementation}), the computational content of the statement is just a program that maps any $x : \dR$ to a $y : \dR$ and the information that $P\ x\ y$ holds is not part of the program itself.

In a sense we can thus often separate a proof in a constructive part, which we would like to map to a program in a certain way, and a purely mathematical part, where we are mostly concerned about correctness and care less about the exact form of the proof.
Now, this poses the practical question on how much of classical mathematics needs to be re-proven over our real number type $\dR$.

Of course, there are already several formalizations of the classical real numbers in proof assistants, which already provide many of the statements we need. 
Instead of starting from scratch, we would like to access such results and be able to apply them in our proofs.

To make this precise, let us assume we have a second real number type $\dRc$ and we have already proven some results over $\dRc$.
Let us further assume that different from our formalization, $\dRc$ is completely classical, by which we mean that classical statements such as  
\[
\all{x : \dRc} \coprodc{x > 0}{\neg (x > 0)}
\]
hold.
Readers familiar with the Coq proof assistant can for example think of the real number type in the Coq standard library.
We want to embed our axiomatization and apply theorems proven over the classical theory to our formalization while separating the constructive part and the classical part of the type theory correctly so that realizability results like those from Section~\ref{ss:axiomatization:realizability} still hold.

Note that, even though now the type theory provides classical types and terms through $\dRc$, 
it should stay fully constructive for the terms that do not access any of the classical axioms.
We can thus think of a term in the type theory as to be formally interpreted in two different models.

Let us define the two type judgements 
\[ \classicaltyping   t : X \]
saying that $t$ of type $X$ may rely on classical axioms and 
\[ \constructivetyping   t' : X' \]
saying that $t'$ of type $X'$ is defined without any classical axioms.

When
$\classicaltyping   t : X$, we interpret it in the category of sets $\Cset$ and
when $\constructivetyping   t : X$, we interpret it in $\Casm$. 
\begin{example}
There is a term $t$ such that
\[ \classicaltyping  t: \all{x : \dRc} \coprodc{x > 0}{\neg (x > 0)} \]
is derivable but 
\[ \constructivetyping  t : \all{x : \dRc} \coprodc{x > 0}{\neg (x > 0)} \] 
does not hold.
\end{example}

Our goal is to relate the two type judgements. 
As we do not assume any counter-classical axioms such as the continuity principle in our theory, 
every constructively proven statement is also true classically.
Thus, obviously when $\constructivetyping   t : X$ is derivable, so is $\classicaltyping   t : X$.

However, we are interested in the other direction, i.e. how to get a constructively well-typed term from a classically well-typed term.
$\Cset$ is a reflective subcategory of $\Casm$
by the forgetful functor $\semGamma : \Casm \to \Cset$ and its right adjoint 
$\semnabla : \Cset \to \Casm$ where for any set $X$, $\semnabla X$ is the assembly of $X$ with 
the trivial realization relation \cite[Theorem~1.5.2]{van2008realizability}. 

For each type $X$, define 
\[
\nabla X :\equiv \some{P : X \to \Prop}\usome{ x : X}P\ x\;.
\]
See that for any type $X$, $\sem{\classicaltyping \nabla X : \Type}$ is isomorphic to $\sem{\classicaltyping X  : \Type}$ in $\Cset$ and 
$\sem{\constructivetyping \nabla X : \Type}$ is isomorphic to $\semnabla\semGamma\sem{\constructivetyping X : \Type}$ in $\Casm$. 
It can be understood as a functor that erases all the computational structure of $X$
while keeping its set-theoretic structure.

It is provable in the type theory using the assumptions of $\Prop$ being the type of classical propositions admitting propositional extensionality 
that $\nabla$ is an idempotent monad where its unit 
$\nablaunit : \all{X : \Type} X \to \nabla X$ on $\nabla X$, 
is an equivalence with the inverse being the multiplication. 
Moreover, it holds that $\nablaunit\ \Prop : \Prop \to \nabla \Prop$ is an equivalence.
That means, given a mapping $f : X_1 \to X_2 \to \cdots \to X_d$, 
there is a naturally defined lifting $\nablalift{f} : \nabla X_1 \to \nabla X_2 \to \cdots \to \nabla X_d$
and given a predicate $P : X_1 \to X_2 \to \cdots \to \Prop$, 
there is $\nablalift{P} :\nabla X_1 \to \nabla X_2 \to \cdots \to \Prop$.

We add the type judgement rule:
\[
\infer[X \text{ is transferable } (\relaterule)]
{\constructivetyping   t :  X}
{\classicaltyping  t :  X}
\]
saying that when $t$ is a classically constructed term of a transferable type $X$, 
we have a constructive term $t$ of type $X$.
A type is transferable if it is of the form $\nabla c$ for a constant type, a $\Type$, or a $\Prop$ variable $c$;
$X \to Y$, $X \times Y$, $X \land Y$, $X \lor Y$, or $\nabla (X + Y)$ for transferable types $X$ and $Y$; $\all{x : X}P(x)$, $\some{x : X}P(x)$, or $\csome{x : X}P(x)$ for transferable types $X$ and $P(x)$; $x = y$, $x < y$, or $x <^\dagger y$; or is $\Type$ or $\Prop$. Roughly speaking, a type is transferable if its subexpressions in the construction of the type are guarded by $\nabla$.

This judgement rule is validated in our interpretation. First, note that $\semnabla\{*\} \simeq \terminalobj$. And, when a type $X$ is transferable, $\sem{\constructivetyping X : \Type} \simeq \semnabla\sem{\classicaltyping X : \Type}$ holds. When $\classicaltyping   t : X$, we have a function $\sem{\classicaltyping   t : X} : \{*\} \to \sem{\classicaltyping   X : \Type}$
in $\Cset$. Hence, we define $\sem{\constructivetyping  t :  X} : \terminalobj \to \sem{\constructivetyping X : \Type}$ by pre and postcomposing the above isomorphisms to
 $\semnabla\sem{\classicaltyping   t : X} : \semnabla\{*\} \to \semnabla\sem{\classicaltyping  X : \Type}$.

To relate our axiomatic real numbers with the classical real number type $\dRc$, we assume the existence of a map $\relator$ with interpretation in $\Cset$ being the identity map $\sem{\classicaltyping \relator :\dR \to \nabla \dRc} :\IR \ni x  \mapsto x \in \IR$.
We assume the following axioms that characterize the mapping.
\begin{flalign}
& \label{a:N1} \relator : \dR \to \nabla \dRc \tag{Axiom $\nabla$1} \\
& \label{a:N2} \all{ x,  y :\dR} \relator\ x = \relator\ y \to x = y \tag{Axiom $\nabla$2} \\
& \label{a:N3} \all{y : \nabla \dRc}\csome{x :\dR} y = \relator\ x \tag{Axiom $\nabla$3} \\
& \label{a:N4} \relator\ 0 = \nablaunit\ \dRc\ 0 \tag{Axiom $\nabla$4} \\
& \label{a:N5} \relator\ 1 = \nablaunit\ \dRc\ 1 \tag{Axiom $\nabla$5} \\
& \label{a:N6} \all{x, y : \dR}\relator\ (x + y) = (\relator\ x) \nablalift{+} (\relator\ y) \tag{Axiom $\nabla$6} \\
& \label{a:N7} \all{x, y : \dR}\relator\ (x \times y) = (\relator\ x) \nablalift{\times} (\relator\ y) \tag{Axiom $\nabla$7} \\
& \label{a:N8} \all{x : \dR}\relator\ (-x) = \nablalift{-} (\relator\ x) \tag{Axiom $\nabla$8} \\
& \label{a:N9} \all{x : \dR}\all{p : x \neq 0}\relator\ (/x\ p) = \nablalift{/} (\relator\ x) \tag{Axiom $\nabla$9} \\
& \label{a:N10} \all{ x, y : \dR} (x < y) = (\relator\ x) \nablalift{<} (\relator\ y) \tag{Axiom $\nabla$10} 
\end{flalign}

\begin{example}
Using $\relator$, we can prove the classical properties of $\dR$
without using the classical axioms of $\dR$. 
For example, suppose we want to prove 
\[ \all{x, y : \dR} x + y = y + x. \]
Using the classical real numbers, we can trivially prove
\[
\classicaltyping\textsf{comm}_{\dRc} :\all{x, y : \nabla \dRc} x \nablalift{+} y = y \nablalift{+} x\;.
\]
Since $x,y :\dRc \classicaltyping   x \nablalift{+} y : \nabla \dRc$, 
from the added judgement rule, we have
\[
x,y :\dRc \constructivetyping   x \nablalift{+} y : \nabla \dRc.
\]
Therefore, the above proof $\textsf{comm}_{\dRc}$ directly transfers to 
a constructive proof 
\[ \constructivetyping  \textsf{comm}_{\dRc} :\all{x, y : \nabla \dRc} x \nablalift{+} y = y \nablalift{+} x. \]

Similarly, from the order decidability of the classical real,
\[
\all{x : \dRc} (x > 0) + (x = 0) + (x < 0)
\]
we can lift it to obtain
\[
\all{x : \nabla \dRc} (x > 0) \lor (x = 0) \lor (x < 0)
\]
Now, using the relator, we can obtain the classical total order
\[
\all{x : \dR} (x > 0) \lor (x = 0) \lor (x < 0)
\]
\end{example}
We will look at a more practical example in Section~\ref{s:square root}.

Note that in Coq we can not formally deal with having two independent type theories simultaneously and therefore a complete correctness proof when applying the relator notion can only be formulated on the meta-level.
We plan to address this issue in future work e.g.\ by writing a Coq plugin that would allow this distinction.

\section{Examples}\label{s:examples}
Let us illustrate the use of our formalization by applications to some standard examples from computable analysis.
\subsection{Maximization}\label{s:example-realmax}
A simple example of an operation that requires multivaluedness in its definition is the maximization operator that takes two real numbers $x$ and $y$ and returns their maximum.
That is, we define 
\[
\mathsf{max}\ x \ y \ z :\equiv  (x > y \rightarrow z = x) \land (x = y \rightarrow z = x) \land (x < y \rightarrow z = y)
\]
and want to derive the term
\[
   \all{x\ y : \dR} \some{z : \dR} \mathsf{max}\ x\ y\ z. 
\]

Partiality of comparisons prevents a direct proof of the statement.
Instead, we apply the limit operator defined in Corollary~\ref{cor:ndd-limit}.
That is, we need to show that there is exactly one $z : \dR$ for which $\mathsf{max}\ x \ y\ z$ holds and that
\[
    \all{n : \dN} \mval \some{z' : \dR} z' \sim_n z. 
\]
The first statement follows from the classical trichotomy.
For the second one we use multivalued branching (Example~\ref{exa:split}) to construct the nondeterministic sequence
\[
    m_{x,y}\ n :\equiv \begin{cases}
    x &\text{if } x > y - 2^{-n}, \\
    y &\text{if } y > x - 2^{-n}.
    \end{cases}
\]
That is, we construct the approximation by concurrently testing whether $x > y - 2^{-n}$, in which case $x$ is a good enough approximation for the maximum, or $x < y + 2^{-n}$, in which case $y$ is a good enough approximation.

\subsection{Intermediate Value Theorem (IVT)}
A classical example from computable analysis (see e.g. \cite[Chapter~6.3]{w00}) is finding the zero of a continuous, real valued function $f: [0,1] \to \IR$ with $f(0) < 0$ and $f(1) > 0$ under the assumption that there is exactly one zero in the interval (i.e. a constructive version of the intermediate value theorem from analysis).

We can define continuity of a function $f: \dR \to \dR$ using the usual  $\epsilon$-$\delta$-criterion
\[
    \mathsf{continuous} \ f :\equiv \all{x : \dR} \all{\epsilon > 0} \csome{\delta > 0} \all{y : \dR} \abs{x - y} \to \abs{(f \ x) - (f\ y)}< \epsilon
\]
and define the property of $f$ having exactly one zero in the interval $[a,b]$ by
\[
    \mathsf{uniq}\ f\ a \ b :\equiv a < b \land f\ a < 0 \land f\ b > 0 \land \usome{z : \dR} a < z < b \land f\ z = 0.
\]
Here $\all{\epsilon > 0} P$ is short for $\all{\epsilon : \dR} \epsilon > 0 \to P$ and $\csome{\epsilon > 0} P$ is short for $\csome{\epsilon : \dR} \epsilon > 0 \land P$.
Then the property we want to show is
\[
\mathsf{IVP}: \all{f : \dR \to \dR} \mathsf{continuous}\ f \to \mathsf{uniq}\ f\ 0\ 1 \to \some{z : \dR} 0<z<1 \land f\ z = 0.
\]

The statement can be proven using the trisection method which is similar to the classical bisection method but avoids uncomputable comparison to $0$.
That is we inductively define sequences $a_i, b_i$ with $f(a_i) \times f(b_i) < 0$ and $b_i - a_i \leq (2/3)^i$.
In each step we let $a_i' := (2a_i + b_i)/3$, $b_i' := (a_i +2b_i)/3$ and in parallel check if $f(a_i') \times f(b_i) < 0$ or $f(a_i) \times  f(b_i') < 0$. 
In the first case we set $a_{i+1} := a_i'$, $b_{i+1} := b_i$, in the second case $a_{i+1} := a_i$, $b_{i+1} := b_i'$.
As at least one of the inequalities is true by the assumptions,  this selection can be done using the $\select$ operator from Section~\ref{s:nondeterminism}. 
The zero can then be defined using the limit operator.
\subsection{Classical proofs and a fast square root algorithm}\label{s:square root}
Let us now look at an example on how the \relator{} operator from Section~\ref{s:relator} can be used.
Again, suppose we introduced a type $\dRc$ of classical real numbers in our theory that already provides proofs of some classical facts about real numbers.
In particular, suppose we have a term $sq$
saying that for any positive real number, there is a square root:
\[
\classicaltyping sq : \all{x :\dRc}0<x \to \some{y : \dRc} x = y \times y
\]
As $\nablaunit\ A : A \to \nabla A$ is an equivalence in the classical type theory, we can derive
\[
\classicaltyping sq' : \all{x :\nabla \dRc}
(\nablaunit\ 0) \nablalift{<} x  \to \some{y : \nabla \dRc} x = y \nablalift{\times} y
\]
As the type of the above judgement is transferable, we have
\[
\constructivetyping sq' : \all{x :\nabla \dRc}
(\nablaunit\ 0) \nablalift{<} x  \to \some{y : \nabla \dRc} x = y \nablalift{\times} y
\]
Using the axioms of the relator, we can obtain a term of type
\[
\constructivetyping \all{x :\dR}
0 < x \to \csome{y : \dR} x = y \times y : \Prop.
\]
This illustrates how we can transport a classical proof of the existence of square root based on $\dRc$ to a constructive proof of the classical existence of square root based on $\dR$.

Let us assume that our classical theory already contains all facts about the square root that we need and that similar to the above example we can transport the classical proofs into our constructive setting.
We use the transported results to verify a constructive version of the real square root function which can be used for efficient computation of the square root, thus we only need to focus on the constructive parts of the below proof.

By the real square root of $x$ we mean the unique nonnegative number $y$ such that $x = y \times y$, and use $y = \sqrt{x}$ as shorthand notation for the property $x = y \times y \land y \geq 0$.
That is, we want to show
\[
    \mathsf{rsqrt}: \all{x : \dR} x \geq 0 \to \some{y : \dR} y = \sqrt{x}
\]
The idea on how to construct this $y$ is fairly simple and has already been used e.g. as an example in \cite{konevcny2020computable}.
The main steps are as follows.
\begin{enumerate}
    \item We define a restricted version of $\mathsf{rsqrt}$ in the interval $[0.25,2]$ using the well-known Heron method.
    \item For any number $x > 0$, we can find a $z \in \IZ$ such that $4^z x \in [0.25, 2]$, i.e., we can scale the number to be inside the interval, apply the restricted square root and rescale the result accordingly.
    \item We extend to $x \geq 0$ by approximating the square root of small enough numbers by $0$.
\end{enumerate}
Let us start with the first item, i.e., we show
\[
    \all{x : \dR} \frac{1}{4} \leq x \leq 2 \to \some{y : \dR} y = \sqrt{x}.
\]
We apply constructive completeness \eqref{a:R26}, thus we need to show the classical fact that there is exactly one such number $\sqrt{x}$ and 
\begin{equation}\label{eq:sqrestrict}
    \all{n : \dN} \some{y_n : \dR} \csome{y : \dR} y = \sqrt{x} \land \abs{y_n - y} < 2^{-n}.
\end{equation}
We derive the former from our classical theory.
For the latter we define the sequence 
\begin{align*}
  \mathsf{heron}\ x\ 0 & :\equiv 1 \\ 
  \mathsf{heron}\ x \ n+1 & :\equiv  \frac{1}{2} \times \left ( \mathsf{heron}\ x \ n + \frac{x}{\mathsf{heron}\ x \ n } \right ).
\end{align*}
It is well-known that the Heron method converges quadratically to $\sqrt{x}$ in the interval $[0.25,2]$, i.e. $\abs{\mathsf{heron}\ x\ n - \sqrt{x}} \leq 2^{-2^{n}}$ which can be used to show \eqref{eq:sqrestrict}.

For the second step, i.e., extending to all positive reals, we first show
\[
    \all{x : \dR} \mval \some{z : \dZ} \frac{1}{4} \leq (4^{z} \times x) \leq 2 
\]
The existence of such a $z$ basically follows from the Archimedean axiom and the Markov principle, but note that we can only find such a $z$ nondeterministically.
As $y = \sqrt{4^z \times x} \to 2^{-z} \times y = \sqrt{x}$ we can show
\[
    \all{x : \dR} x > 0 \to \mval (\some{y : \dR} y = \sqrt{x})
\]
and by the uniqueness of the square root \ref{a:M18} lets us remove the nondeterminism.

Finally, for the last step of extending to $0$, we use the limit operation from Corollary~\ref{cor:ndd-limit}.
That is, we need to show
\[
    \all{n : \dN} \mval \some{y_n : \dR} \csome{y : \dR} y = \sqrt{x} \land \abs{y_n-y} \leq 2^{-n}.
\]
Given $n : \dN$, we can nondeterminstically choose one of $x < 2^{-2n}$ or $x > 0$.
In the first case we can choose $0$ as a $2^{-n}$ approximation to the square root.
In the second case we can get an exact square root from step 2.
\begin{remark}
Step 2 and 3 can also be combined in a single step leading to a slightly more efficient algorithm, but we separated the two parts for the sake of presentation.
\end{remark}
\subsection{Nondeterministic computation of complex square roots}\label{ss:complex-root}
When trying to extend the previous example to complex numbers, a difficulty arises that the square root operation on complex numbers is inherently multivalued.
The square root of a number $z \in \IC$ is a number $x \in \IC$ such that $x^2 = z$.
If $x$ is a square root of $z$ then so is $-x$ and there are no other square roots. 
Thus, for every $z \neq 0$ there are exactly two square roots.
In the case of the square root on nonnegative reals in the previous section we could simply choose one of the two square roots to get a singlevalued branch.
However, it is well known that no such continuous choice exists for the whole complex plane:
The square root has a branch point at $z=0$ and thus there is no singlevalued, continuous square root function in any region containing $z=0$ as an interior point.
We show that we can, however, prove the nondeterministic existence of a square root constructively in our theory.
That is, we prove the constructive and nondeterministic existence of square roots of complex numbers using a simple method described e.g. in \cite{muller1998implementing}.

The following well-known algebraic formula can be used to reduce the calculation of complex square roots to calculating real square roots (see e.g. \cite[\S 6]{cooke2008classical}).

Let $z = a + ib$, then
\[
 \sqrt{\frac{\sqrt{a^2+b^2}+a}{2}}+i\cdot\mathrm{sgn}(b)\cdot\sqrt{\frac{\sqrt{a^2+b^2}-a}{2}}
\]
is one of the square roots of $z$.
Of course, this function is not computable as $\mathrm{sgn}$ is not continuous in $0$.
However, if $z \neq 0$, we can nondeterministically choose one of the cases $a < 0$, $a > 0$, $b < 0$, $b >0$  and apply the formula.
(In case $a > 0$ or $a < 0$, a slight adaption of the formula using $\mathrm{sgn}(a)$ instead of $\mathrm{sgn}(b)$ is used.)

Thus, using the real square root we can show the following restricted version of the existence of a complex square root
\begin{equation}\label{eq: restricted sqrt}
    \sqrt{}_0: \all{z : \dC} z \neq 0 \to \mval (\some{x : \dC} x\times x = z)
\end{equation}
Finally, we apply Theorem~\ref{thm: mlimit advice} to also include the case $z = 0$.
Recall that given a $2^{-n}$ approximation $x_n$ of a square root of $z$ that satisfies a certain predicate $Q$ that we will define later, we need to choose a $2^{-(n+1)}$ approximation $x_{n+1}$ of a square root of $z$ with $\lvert x_{n+1} - x_n \rvert \leq 2^{n+1}$ and such that $x_{n+1}$ satisfies $Q$.
We proceed as follows.
In the beginning, at each step $n$ we nondeterministically choose one of the two cases  $\abs{z} < 2^{-2(n+2)}$ or $\abs{z} > 0$.
In the first case, $0$ is a good enough approximation for any square root of $z$.
In the second case, we know $z \neq 0$ and thus can apply \eqref{eq: restricted sqrt} to get the exact value of a square root.
However, once we have selected the second case, for any later elements of the sequence we just return the previous value $x_n$.
Thus, all possible sequences the refinement procedure returns have the form $0, 0, 0, \dots, 0, x, x, x, \dots$, where $x$ is a square root of $z$.
Further, if we returned $0$ at the $n$-th step, we know that $\abs{z} < 2^{-2(n+2)}$ and therefore for any square root $x$, $\abs{x} < 2^{-(n+2)}$ and returning $x$ at step $n+1$ is a valid refinement of the previous approximation.

Thus, the invariant property of the sequence defined in this way is given by the relation $Q\,n\,x : \Type$ defined by 
\[ 
    Q\ n\ x :\equiv \big((\abs{z} \leq 2^{-2(n+2)}) \times (x = 0)\big) + (x \times x = z).
\]
Applying Theorem~\ref{thm: mlimit advice} with this $Q$, we get
\[
    \sqrt{}: \all{z : \dC} \mval (\some{x : \dC} x\times x = z)
\]
\subsection{Equivalence of axiomatic real numbers}
\label{ss:axiom-equivalence}
To prove that the set of axioms we devised to express exact real number computation is 
expressive enough, we prove that any two types $\dR_1$ and $\dR_2$  satisfying the set of axioms are 
type-theoretically equivalent.
As our type theory is extensional, they are equivalent if we can construct the mutually inverse functions $\iota_1 : \dR_1 \to \dR_2$ and 
$\iota_2 : \dR_2 \to \dR_1$. 
The basic idea of the construction is similar to \cite{DBLP:journals/mlq/Hertling99}
where an effective model-theoretic structure of real numbers is suggested.

From the classical Archimedean principle of real numbers, for any $x : \dR_1$, 
there classically is $z : \dZ$
which bounds the magnitude of $x$ in the sense that
$ |x| < z$ holds.
Applying nondeterministically the Markov principle, we can construct the nondeterministic rounding operator:
\[
\mathsf{round} : \all{x : \dR_1}\mval\some{z : \dZ} 
z - 1 < x < z + 1.
\]
Recall that the usual rounding is not computable due to discontinuity of the classical rounding function \cite[Theorem~4.3.1]{w00}.
Then, by scaling, we can construct a term of type
\[
\mathsf{dyadic} : 
\all{x : \dR_1}\all{n : \dN}\mval\some{z : \dZ}|x - z \times 2^{-n}| \leq 2^{-n}
\]
which nondeterministically approximates the binary magnitude of real numbers.

By using the destruction principle of the nondeterminism
and doing some clerical work, we get 
the fact that for any real number $x : \dR_1$, there exists
a sequence of nondeterministic integers
$f : \dN \to \mval\dZ$ such that 
every section $g :\dN \to \dZ$ of $f$
is an approximation sequence of $x$
in the sense that $\mathsf{is\_limit}\,
x\,(\lam{n : \dN} (g\, n)\times 2^{-n})
$ holds.

Note that the description thus far implicitly used
the integer embedding in $\dR_1$.
Taking out the embedding explicitly, we
can prove that
for any sequence of integers $g : \dN \to \dZ$,
if its induced dyadic sequence is Cauchy in one type of real numbers,
it also is Cauchy in the other one.
Hence, from $f$, using the other integer embedding $\dZ \to \dR_2$,
we can get a sequence of nondeterministic real numbers
in $\dR_2$ where every section is a Cauchy sequence in $\dR_2$.
Thus, after proving that the limit points of such sequences is unique, 
we can apply the deterministic limit of nondeterministic sequences
(Lemma~\ref{l:lim}) to construct a real number in $\dR_2$.

Intuitively, we use the space of sequences of nondeterministic
integers as an independent stepping stone connecting the two 
axiomatic types $\dR_1$ and $\dR_2$. 
In our axiomatization of nondeterminism we can analyze each section of a sequence of nondeterministic integers so that we can apply a limit operation in $\dR_2$.

The other direction $\dR_2 \to \dR_1$ can be constructed analogously, and the two mappings being inverse to each other can be proved easily, concluding that the two axiomatic types $\dR_1$ and $\dR_2$ are equivalent.

\section{Implementation}\label{s:implementation}
We implemented the above theory in the Coq proof assistant\footnote{The source code and detailed usage instructions are on\\ \url{https://github.com/holgerthies/coq-aern/tree/release-2022-01}}.
From a correctness proof in our implementation, we can extract Haskell code that uses the AERN library to perform basic real number arithmetic operations.
For this, we introduce several extraction rules replacing operations on the constructive reals with the corresponding AERN functions.
The extracted code requires only minor mechanical editing, namely adding several import statements (including AERN and comparison operators that return generalised Booleans) and type coercions used in the extracted code to replace erased Coq type translations.


\subsection{Overview of the implementation}

Our axioms for $\kleene$ $\mval$ and $\dR$ are not directly encoded as Coq axioms, but as members of the following type classes:
\begin{itemize}
    \item \mintinline{coq}{LazyBool}: Kleenean operations and their properties, in file \texttt{Kleene.v}
    \item \mintinline{coq}{MultivalueMonad}: operations and properties of $\mval$ including \mintinline{coq}{choose} (a slightly generalized $\select$) and M-dependent choice, in file \texttt{MultivalueMonad.v}
    \item \mintinline{coq}{ComplArchiSemiDecOrderedField}: operations and properties of $\dR$, including limits, in file \texttt{RealAxioms.v}
\end{itemize}
The class \mintinline{coq}{MultivalueMonad} requires also that $\mval$ is an instance of a type class \mintinline{coq}{Monad} and is homomorphic to the set monad \mintinline{coq}{MPset_Monad} via the type class \mintinline{coq}{Monoid_hom}, formally encoding the ``picture'' natural transformation.

We add Coq axioms that declare the existence of models for these type classes:
\begin{minted}{coq}
Parameter K : Set.
Axiom K_LazyBool : LazyBool K.

Parameter M : Type -> Type.
Axiom M_Monad : Monad M.
Axiom MultivalueMonad_description : Monoid_hom M_Monad NPset_Monad.
Axiom M_MultivalueMonad : 
    @MultivalueMonad _ K_LazyBool _ _ MultivalueMonad_description.

Parameter R : Set.
Axiom R_SemiDecOrderedField : @SemiDecOrderedField  _ K_LazyBool R.
Axiom R_ComplArchiSemiDecOrderedField : 
    @ComplArchiSemiDecOrderedField _ _ _ R_SemiDecOrderedField.
\end{minted}
but we do this only locally, at the point of extracting code.  
Our application examples do not depend on these axioms, but instead assume the existence of such models in their context.  For example, our real maximisation implementation includes:

\begin{minted}{coq}
Require Import Real.

Section Minmax.
  Generalizable Variables K M Real.

  Context `{klb : LazyBool K} `{M_Monad : Monad M}
          {MultivalueMonad_description : Monoid_hom M_Monad NPset_Monad} 
          {M_MultivalueMonad : MultivalueMonad}
          {Real : Type}
          {SemiDecOrderedField_Real : SemiDecOrderedField Real}
          {ComplArchiSemiDecOrderedField_Real : 
                                    ComplArchiSemiDecOrderedField}.
          
  Definition real_is_max (x y z : Real)
    := (x > y -> z = x) /\ (x = y -> z = x) /\ (x < y -> z = y).
    
  Lemma real_max_prop : forall x y, {z | real_is_max x y z}.
  Proof....

\end{minted}

\subsection{Exact real computation and the AERN framework}

\begin{figure}[htp]
\begin{minted}[fontsize=\small]{Haskell}
real_max :: CReal -> CReal -> CReal
real_max x y =
  limit $ \(n :: Integer) -> 
    let e = 0.5^n in
    if select (x > y - e) (y > x - e)
      then x
      else y
\end{minted}
    \caption{Real number maximisation in Haskell/AERN}
    \label{fig:aern-realmax}
\end{figure}

Figure~\ref{fig:aern-realmax} lists a hand-written Haskell/AERN code for computing real number maximisation function presented in Subsection~\ref{s:example-realmax}.
This code shows the most important features of AERN.
\mintinline{Haskell}{CReal} is the type of lazy real numbers, essentially a convergent sequence of interval approximations to the real number.  
The sequence is not necessarily fast converging.  
Like in iRRAM, the elements of the sequence are computed with increasing precision for dyadic interval endpoints.  
In practice, the sequences usually converge exponentially (measured in bit-length) since the precision increases exponentially.
The usual Haskell Prelude arithmetic operators are being applied to \mintinline{Haskell}{CReal} numbers in expressions such as  \mintinline{Haskell}{0.5^n} and \mintinline{Haskell}{y - e}.

The real number comparisons \mintinline{Haskell}{x > y - e} and \mintinline{Haskell}{y > x - e} do not use the usual Haskell Prelude comparison operators.  
These comparisons return \mintinline{Haskell}{CKleenean}, a converging sequence of Kleeneans computed with increasing precisions, exactly like elements of \mintinline{Haskell}{CReal}.
The function \mintinline{Haskell}{select :: CKleenean -> CKleenean -> Bool} behaves exactly as the $\select$ operator defined in \ref{a:M17}.  
Note the absence of $\mval$ in the return type of AERN's \mintinline{Haskell}{select}.  
Multivaluedness is intrinsic thanks to redundancy in the underlying representations of the Kleenean inputs.

Finally, the function \mintinline{Haskell}{limit :: (Integer -> CReal) -> CReal} takes a fast converging sequence of real numbers and returns its limit.
We have to specify the type of the index \mintinline{Haskell}{n} since the limit function is in fact generic and can be applied to different types of sequences.

\subsection{Code extraction}

\begin{figure}[ht]
    \small\centering
    \begin{tabular}{c@{\kern 6em}c}
        \toprule
         Coq & Haskell \\
         \cmidrule(r){1-1}
         \cmidrule(l){2-2}
         \code{Real} & \code{AERN2.CReal} \\
         \code{Real0} & \code{0} \\
         \code{Realplus} & \code{(Prelude.+)}\\
         \code{limit} & \code{AERN2.limit} \\
         \code{choose} & \code{AERN2.select} \\
         \code{Realltb} & \code{(Numeric.OrdGenericBool.<)} \\
         \code{K} & \code{AERN2.CKleenean} \\
         \code{sumbool} & \code{Prelude.Bool} \\
         \code{M} & type identity \\
         \code{unitM} & \code{Prelude.id} \\
         \code{Nat.log2} & \code{(integer . integerLog2)} \\
         \bottomrule
    \end{tabular}
    \caption{Examples of our code extraction mappings}
    \label{fig:extraction-examples}
\end{figure}

Haskell/AERN code extraction is defined in file \texttt{Extract.v}. 
The key mappings are listed in Figure~\ref{fig:extraction-examples}.
Note that the monad $\mval$ does not appear in the extracted programs.  
Multivaluedness is intrinsic thanks to redundancy in the underlying representations.
The generalised comparison \code{Numeric.OrdGenericBool.<} returns the (lazy) Kleenean for real numbers.  It is the same comparison that features in Figure~\ref{fig:aern-realmax}.

Figure~\ref{fig:realmax-code} shows parts of the Coq proof of the maximisation operator specification, and relevant parts of the extracted Haskell code.

\begin{figure}[t]
    \centering
\begin{multicols}{2}
\begin{minted}
[fontsize=\fontsize{7}{9pt}\selectfont]
{Coq}
Lemma real_max_prop : 
   forall x y, {z | real_is_max x y z}.
Proof.
  intros.
  apply real_mslimit_P_lt.
  + (* max is single valued predicate *) 
    ...
  + (* construct limit *)
	intros.
	apply (mjoin (x>y - prec n)
	             (y > x - prec n)).
	++ intros [c1|c2].
	   +++ (* when y-2^n < x *)
         exists x. ...
	   +++ (* when x-2^n < y *)
         exists y. ...
	++ apply M_split.
	   apply @prec_pos.
Defined.
\end{minted}
\columnbreak
\begin{minted}
[fontsize=\fontsize{7}{9pt}\selectfont]
{Haskell}
realmax_prop ... x y =
  real_mslimit_P_lt ... (\n ->
    mjoin ...
      (\h -> 
        case h of {
          Prelude.True -> x;
          Prelude.False -> y})
      (m_split ... x y (prec ... n)))

m_split ... x y _UU03b5_ =
  choose ...
    (real_lt_semidec ...
      (real_minus ... y _UU03b5_) x)
    (real_lt_semidec ...
      (real_minus ... x _UU03b5_) y)
\end{minted}
%
\end{multicols}
    \vspace*{-2ex}
    \caption{Outline of a Coq proof and corresponding extracted Haskell code}
    \label{fig:realmax-code}
\end{figure}

\subsection{Performance measurements}
Since our axiomatization of constructive reals is built on a datatype similar to that used by AERN, we expect the performance of the extracted programs to be similar to that of hand-written AERN code.
Our measurements, summarized in Figure~\ref{fig:benchmarks},
are consistent with our hypothesis%
\footnote{Benchmarks were run 10 times on a Lenovo T440p laptop with Intel i7-4710MQ CPU and 16GB RAM, OS Ubuntu 18.04, compiled using Haskell Stackage LTS 17.2.}.
iRRAM is known to be one of the most efficient implementations of exact real computation and thus we also included hand-written iRRAM versions for calibration. 
The last three rows are examples of root finding by trisection.
The iRRAM trisection code benefits from in-place update.

\begin{figure}
    {\centering
    \begin{tabular}{ccccc}
        \toprule
        \multicolumn{2}{c}{Benchmark} & \multicolumn{3}{c}{Average execution time (s)}\\
        \cmidrule(r){1-2}
        \cmidrule(l){3-5}
        \small Formula & \kern 0.5em \small Accuracy\kern 0.5em & \kern 0.5em \small Extracted\kern 0.5em & \kern 0.5em \small Hand-written\kern 0.5em & \kern 0.5em\small iRRAM\\
        \cmidrule(r){1-1}
        \cmidrule(lr){2-2}
        \cmidrule(lr){3-3}
        \cmidrule(lr){4-4}
        \cmidrule(l){5-5}
        $\max(0,\pi-\pi)$ & $10^6$ bits & 16.8 & 16.2 & 1.59 \\[1.5ex]
        $\sqrt{2}$ & $10^6$ bits& 0.72 & 0.72 & 0.62 \\[0ex]
        $\sqrt{\sqrt{2}}$ & $10^6$ bits & 1.51 & 1.54 & 1.15 \\[1.5ex]
        $x-0.5=0$ & $10^3$ bits& 3.57 & 2.3 & 0.03 \\[0.5ex]
        $x(2-x)-0.5=0$ & $10^3$ bits& 4.30 & 3.08 & 0.04 \\[0.5ex]
        $\sqrt{x+0.5}-1=0$ & $10^3$ bits& 19.4 & 17.8 & 0.29 \\
        \bottomrule
    \end{tabular}\par}
    \vspace{1ex}
    \caption{Benchmarks and measurements}
    \label{fig:benchmarks}
\end{figure}
Figure~\ref{fig:csqrt-linechart} shows the execution times of this extracted code on a sample of inputs and with various target precisions.  
We use logarithmic scales to make the differences easier to see. 
Slower performance at zero reflects the fact that the limit computation uses the whole sequence, unlike away from zero where only a finite portion of the sequence is evaluated and the faster converging Heron iteration takes over for higher precisions.
The closer the input is to zero, the later this switch from limit to Heron takes place when increasing precision.  
For very large inputs, there is a notable constant overhead associated with scaling the input to the range where Heron method converges.  
The performance of our complex square root appears to be comparable to the performance of our real square root, which is comparable to a hand-written Haskell/AERN implementation  \cite{wollic}.

\begin{figure}
    \centering
    \includegraphics[width=\hsize]{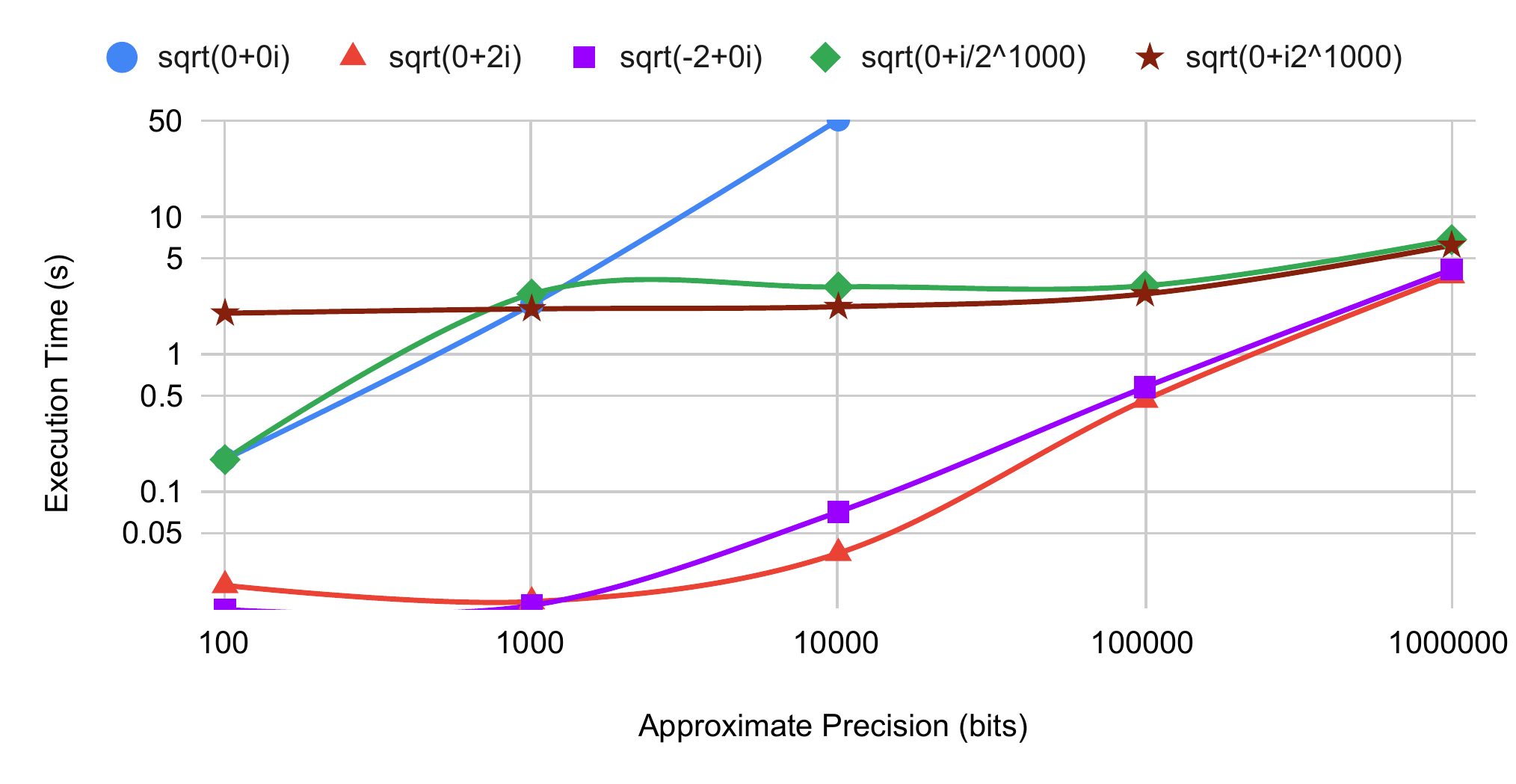}
    \caption{Execution time of the extracted complex square root function}
    \label{fig:csqrt-linechart}
\end{figure}
\section{Conclusion and Future Work}
We presented a new axiomatization of reals in a constructive dependent type theory and 
proved its soundness by extending the standard realizability interpretation from computable analysis.
We implemented our theory in Coq and used Coq's code extraction features to generate efficient Haskell programs 
for exact real computation based on the AERN library.

Our new axiomatization is particularly well-suited for verifying exact real computation programs 
where real numbers are provided as abstract entities and the underlying representations are hidden to the users.
Our axiomatization of reals is representation independent and its realizability interpretation is generically defined that
the interpretation can be setup to be any effective real numbers.
Hence, though in our example we mapped our axiomatic reals to a specific datatype of real numbers, \code{AERN2.CReal} in Haskell, 
it can be mapped also to any other sane datatype of reals.
Moreover, by proving that any concrete types satisfying our set of axioms are equivalent, 
it this sense, we showed that our axiomatization of reals is universal.


In future work, we plan to extend our implementation by other important operations on real numbers such as trigonometric and exponential functions and mathematical constants such as $\pi$ and $e$ and operations like polynomial root finding and matrix diagnonalization which essentially require nondeterministic limits.
Further, by extending our framework to real functions, 
including a sound axiomatization of continuity principle, e.g., \cite{xuthesis}, 
we could also study solution operators for ordinary or partial differential equations by applying recent ideas from real complexity theory \cite{kawamura2018parameterized,koswara2019computational}.

\bibliographystyle{alpha}
\bibliography{refs}

\newcommand{\etalchar}[1]{$^{#1}$}
\begin{thebibliography}{GWBC20}

\bibitem[BCC{\etalchar{+}}06]{balluchi2006ariadne}
Andrea Balluchi, Alberto Casagrande, Pieter Collins, Alberto Ferrari, Tiziano
  Villa, and Alberto~L Sangiovanni-Vincentelli.
\newblock Ariadne: a framework for reachability analysis of hybrid automata.
\newblock In {\em In: Proceedings of the International Syposium on Mathematical
  Theory of Networks and Systems.}, 2006.

\bibitem[Ber16]{berger2016extracting}
Ulrich Berger.
\newblock Extracting non-deterministic concurrent programs.
\newblock In {\em 25th EACSL Annual Conference on Computer Science Logic (CSL
  2016)}. Schloss Dagstuhl-Leibniz-Zentrum fuer Informatik, 2016.

\bibitem[BFC19]{boldo2019round}
Sylvie Boldo, Florian Faissole, and Alexandre Chapoutot.
\newblock {Round-off error and exceptional behavior analysis of explicit
  Runge-Kutta methods}.
\newblock {\em IEEE Transactions on Computers}, 69(12):1745--1756, 2019.

\bibitem[BFM09]{boldo2009}
Sylvie Boldo, Jean-Christophe Filli{\^a}tre, and Guillaume Melquiond.
\newblock Combining coq and gappa for certifying floating-point programs.
\newblock In Jacques Carette, Lucas Dixon, Claudio~Sacerdoti Coen, and
  Stephen~M. Watt, editors, {\em Intelligent Computer Mathematics}, pages
  59--74, Berlin, Heidelberg, 2009. Springer Berlin Heidelberg.

\bibitem[BH98]{BRATTKA1998490}
Vasco Brattka and Peter Hertling.
\newblock Feasible real random access machines.
\newblock {\em Journal of Complexity}, 14(4):490--526, 1998.

\bibitem[Bis67]{bishop1967foundations}
Errett~Albert Bishop.
\newblock Foundations of constructive analysis, 1967.

\bibitem[BLM16]{BLM16}
Sylvie Boldo, Catherine Lelay, and Guillaume Melquiond.
\newblock Formalization of real analysis: {A} survey of proof assistants and
  libraries.
\newblock {\em Mathematical Structures in Computer Science}, 26(7):1196--1233,
  2016.

\bibitem[BM11]{boldo2011flocq}
Sylvie Boldo and Guillaume Melquiond.
\newblock Flocq: A unified library for proving floating-point algorithms in
  coq.
\newblock In {\em 2011 IEEE 20th Symposium on Computer Arithmetic}, pages
  243--252. IEEE, 2011.

\bibitem[BM17]{DBLP:books/daglib/0041425}
Sylvie Boldo and Guillaume Melquiond.
\newblock {\em Computer Arithmetic and Formal Proofs - Verifying Floating-point
  Algorithms with the Coq System}.
\newblock {ISTE} Press, 2017.

\bibitem[Bra03]{Bra03f}
Vasco Brattka.
\newblock The emperor's new recursiveness: The epigraph of the exponential
  function in two models of computability.
\newblock In Masami Ito and Teruo Imaoka, editors, {\em Words, Languages \&
  Combinatorics III}, pages 63--72, Singapore, 2003. World Scientific
  Publishing.
\newblock ICWLC 2000, Kyoto, Japan, March 14--18, 2000.

\bibitem[Bri99]{BRIDGES199995}
Douglas~S. Bridges.
\newblock Constructive mathematics: a foundation for computable analysis.
\newblock {\em Theoretical Computer Science}, 219(1):95--109, 1999.

\bibitem[BT21]{DBLP:journals/apal/BergerT21}
Ulrich Berger and Hideki Tsuiki.
\newblock Intuitionistic fixed point logic.
\newblock {\em Ann. Pure Appl. Log.}, 172(3):102903, 2021.

\bibitem[Chl13]{chlipala2013certified}
Adam Chlipala.
\newblock {\em {Certified programming with dependent types: a pragmatic
  introduction to the Coq proof assistant}}.
\newblock MIT Press, 2013.

\bibitem[Coo08]{cooke2008classical}
Roger~L Cooke.
\newblock {\em Classical algebra: its nature, origins, and uses}.
\newblock John Wiley \& Sons, 2008.

\bibitem[FB18]{Mueller18}
Robert~Rettinger Franz~Brausse, Norbert~M{\"{u}}ller.
\newblock {Intensionality and Multi-Valued Limits}.
\newblock In {\em Proc. 15th Internat. Conf. on Computability and Complexity in
  Analysis ({CCA})}, page~11, 2018.

\bibitem[GWBC20]{gallois2020optimal}
Diane Gallois-Wong, Sylvie Boldo, and Pascal Cuoq.
\newblock Optimal inverse projection of floating-point addition.
\newblock {\em Numerical Algorithms}, 83(3):957--986, 2020.

\bibitem[Her99]{DBLP:journals/mlq/Hertling99}
Peter Hertling.
\newblock A real number structure that is effectively categorical.
\newblock {\em Math. Log. Q.}, 45:147--182, 1999.

\bibitem[Hof95]{10.1007/BFb0022273}
Martin Hofmann.
\newblock On the interpretation of type theory in locally cartesian closed
  categories.
\newblock In Leszek Pacholski and Jerzy Tiuryn, editors, {\em Computer Science
  Logic}, pages 427--441, Berlin, Heidelberg, 1995. Springer Berlin Heidelberg.

\bibitem[Jac99]{jacobs1999categorical}
Bart Jacobs.
\newblock {\em Categorical logic and type theory}.
\newblock Elsevier, 1999.

\bibitem[Kon18]{Konecny18}
Michal Konečný.
\newblock {Verified Exact Real Limit Computation}.
\newblock In {\em Proc. 15th Internat. Conf. on Computability and Complexity in
  Analysis ({CCA})}, pages 9--10, 2018.

\bibitem[Kon21]{konecny2008aern}
Michal Konečný.
\newblock {aern2-real: A Haskell library for exact real number computation}.
\newblock \url{https://hackage.haskell.org/package/aern2-real}, 2021.

\bibitem[KPT21]{wollic}
Michal Kone{\v{c}}ný, Sewon Park, and Holger Thies.
\newblock Axiomatic reals and certified efficient exact real computation.
\newblock In {\em International Workshop on Logic, Language, Information, and
  Computation}, pages 252--268. Springer, 2021.

\bibitem[KST18]{kawamura2018parameterized}
Akitoshi Kawamura, Florian Steinberg, and Holger Thies.
\newblock Parameterized complexity for uniform operators on multidimensional
  analytic functions and {ODE} solving.
\newblock In {\em International Workshop on Logic, Language, Information, and
  Computation}, pages 223--236. Springer, 2018.

\bibitem[KST20]{konevcny2020computable}
Michal Konečný, Florian Steinberg, and Holger Thies.
\newblock Computable analysis for verified exact real computation.
\newblock In {\em 40th IARCS Annual Conference on Foundations of Software
  Technology and Theoretical Computer Science (FSTTCS 2020)}. Schloss
  Dagstuhl-Leibniz-Zentrum f{\"u}r Informatik, 2020.

\bibitem[KSZ19]{koswara2019computational}
Ivan Koswara, Svetlana Selivanova, and Martin Ziegler.
\newblock Computational complexity of real powering and improved solving linear
  differential equations.
\newblock In {\em International Computer Science Symposium in Russia}, pages
  215--227. Springer, 2019.

\bibitem[KW85]{kreitz1985theory}
Christoph Kreitz and Klaus Weihrauch.
\newblock Theory of representations.
\newblock {\em Theoretical computer science}, 38:35--53, 1985.

\bibitem[Lon95]{longley1995realizability}
John~R Longley.
\newblock {\em Realizability toposes and language semantics}.
\newblock PhD thesis, University of Edinburgh. College of Science and
  Engineering., 1995.

\bibitem[Luc77]{LUCKHARDT1977321}
Horst Luckhardt.
\newblock A fundamental effect in computations on real numbers.
\newblock {\em Theoretical Computer Science}, 5(3):321 -- 324, 1977.

\bibitem[Mel08]{melquiond2008proving}
Guillaume Melquiond.
\newblock Proving bounds on real-valued functions with computations.
\newblock In {\em International Joint Conference on Automated Reasoning}, pages
  2--17. Springer, 2008.

\bibitem[Mon08]{monniaux2008pitfalls}
David Monniaux.
\newblock The pitfalls of verifying floating-point computations.
\newblock {\em ACM Transactions on Programming Languages and Systems (TOPLAS)},
  30(3):1--41, 2008.

\bibitem[MS15]{miyamoto2015program}
Kenji Miyamoto and Helmut Schwichtenberg.
\newblock Program extraction in exact real arithmetic.
\newblock {\em Mathematical Structures in Computer Science}, 25(8):1692--1704,
  2015.

\bibitem[M{\"u}l98]{muller1998implementing}
Norbert~Th M{\"u}ller.
\newblock Implementing limits in an interactive realram.
\newblock In {\em 3rd Conference on Real Numbers and Computers, 1998, Paris},
  volume~13, page~26, 1998.

\bibitem[M{\"u}l00]{irram}
Norbert~Th M{\"u}ller.
\newblock {The iRRAM: Exact arithmetic in C++}.
\newblock In {\em International Workshop on Computability and Complexity in
  Analysis}, pages 222--252. Springer, 2000.

\bibitem[NP18]{neumann2018topological}
Eike Neumann and Arno Pauly.
\newblock A topological view on algebraic computation models.
\newblock {\em Journal of Complexity}, 44:1--22, 2018.

\bibitem[PBC{\etalchar{+}}16]{park2016foundation}
Sewon Park, Franz Brau{\ss}e, Pieter Collins, SunYoung Kim, Michal Konečný,
  Gyesik Lee, Norbert M{\"u}ller, Eike Neumann, Norbert Preining, and Martin
  Ziegler.
\newblock Foundation of computer (algebra) analysis systems: Semantics, logic,
  programming, verification.
\newblock {\em arXiv e-prints}, pages arXiv--1608, 2016.

\bibitem[Reu99]{reus1999realizability}
Bernhard Reus.
\newblock Realizability models for type theories.
\newblock {\em Electronic Notes in Theoretical Computer Science},
  23(1):128--158, 1999.

\bibitem[RPS{\etalchar{+}}20]{ringer2020qed}
Talia Ringer, Karl Palmskog, Ilya Sergey, Milos Gligoric, and Zachary Tatlock.
\newblock {QED at large: A survey of engineering of formally verified
  software}.
\newblock {\em arXiv preprint arXiv:2003.06458}, 2020.

\bibitem[Sch02]{schroder2002effectivity}
Matthias Schr{\"o}der.
\newblock Effectivity in spaces with admissible multirepresentations.
\newblock {\em Mathematical Logic Quarterly: Mathematical Logic Quarterly},
  48(S1):78--90, 2002.

\bibitem[Sch06]{schwichtenberg2006constructive}
Helmut Schwichtenberg.
\newblock Constructive analysis with witnesses.
\newblock {\em Proof Technology and Computation. Natio Science Series}, pages
  323--354, 2006.

\bibitem[See84]{seely_1984}
R.~A.~G. Seely.
\newblock Locally cartesian closed categories and type theory.
\newblock {\em Mathematical Proceedings of the Cambridge Philosophical
  Society}, 95(1):33–48, 1984.

\bibitem[Spe49]{Spe49}
Ernst Specker.
\newblock Nicht konstruktiv beweisbare {S}\"atze der {A}nalysis.
\newblock {\em The Journal of Symbolic Logic}, 14(3):145--158, 1949.

\bibitem[Str91]{streicher2012semantics}
Thomas Streicher.
\newblock {\em Semantics of type theory: correctness, completeness and
  independence results}.
\newblock Progress in Theoretical Computer Science. Birkhäuser, Boston, MA,
  1991.

\bibitem[STT21]{steinberg2019quantitative}
Florian Steinberg, Laurent Thery, and Holger Thies.
\newblock {Computable analysis and notions of continuity in Coq}.
\newblock {\em {Logical Methods in Computer Science}}, {Volume 17, Issue 2},
  May 2021.

\bibitem[VO08]{van2008realizability}
Jaap Van~Oosten.
\newblock {\em Realizability: an introduction to its categorical side}.
\newblock Elsevier, 2008.

\bibitem[Wei00]{w00}
K.~Weihrauch.
\newblock Computable analysis.
\newblock Springer, Berlin, 2000.

\bibitem[Xu15]{xuthesis}
Chuangjie Xu.
\newblock {\em A continuous computational interpretation of type theories}.
\newblock PhD thesis, University of Birmingham, 2015.

\end{thebibliography}

\end{document}